\newtheorem{theorem}{Theorem}
\newtheorem{corollary}[theorem]{Corollary}
\newtheorem{definition}[theorem]{Definition}
\newtheorem{lemma}[theorem]{Lemma}
\newcommand{\abs}[1]{{\left| #1 \right|}}
\newcommand{\set}[1]{\left\{#1\right\}}
\newcommand{\eps}{\varepsilon}
\newcommand{\tp}[1]{\left(#1\right)}
\newcommand{\sqtp}[1]{\left[#1\right]}
\newcommand{\poly}{\mathrm{poly}}
\newcommand{\polylog}{\mathrm{polylog}}
\newcommand{\congest}{\mathsf{congestion}}
\newcommand{\load}{\mathsf{load}}
\newcommand{\rload}{\mathsf{rload}}
\newcommand{\flow}{\mathsf{flow}}
\newcommand{\LargeVal}{\mathsf{LARGE}}
\newcommand{\SmallVal}{\mathsf{SMALL}}
\renewcommand{\P}{\mathsf{P}}
\newcommand{\NP}{\mathsf{NP}}
\newcommand{\ZPP}{\mathsf{ZPP}}
\newcommand{\DTIME}{\mathsf{DTIME}}
\newcommand{\calD}{\mathcal{D}}
\newcommand{\map}{\mathcal{M}}
\newcommand{\height}{\mathsf{height}}
\newcommand{\Prob}{\textbf{Pr}}
\newcommand{\Exp}{\mathbb{E}}
\newcommand{\cmid}{:\, }
\def\+#1{\mathcal{#1}}
\def\=#1{\mathbb{#1}}
\title{Survivable Network Design Revisited: Group-Connectivity}
\author{
Qingyun Chen \thanks{University of California, Merced: \texttt{qingyun.chen152@gmail.com}}
\and
Bundit Laekhanukit \thanks{Shanghai University of Finance and Economics: \texttt{lbundit+sufe@gmail.com}}
\and
Chao Liao\thanks{Shanghai Jiao Tong University. Email: \texttt{chao.liao.95@gmail.com}}
\and
Yuhao Zhang\thanks{Shanghai Jiao Tong University. Email: \texttt{zhang\_yuhao@sjtu.edu.cn}}
}
\date{}
\begin{document}

\maketitle

\begin{abstract}
    In the classical survivable network design problem (SNDP), we are given an undirected graph $G=(V,E)$ with costs on edges and a connectivity requirement $k(s,t)$ for each pair of vertices. The goal is to find a minimum-cost subgraph $H\subseteq V$ such that every pair $(s,t)$ are connected by $k(s,t)$ edge or (openly) vertex disjoint paths, abbreviated as EC-SNDP and VC-SNDP, respectively. The seminal result of Jain [FOCS'98, Combinatorica'01] gives a $2$-approximation algorithm for EC-SNDP, and a decade later, an $O(k^3\log n)$-approximation algorithm for VC-SNDP, where $k$ is the largest connectivity requirement, was discovered by Chuzhoy and Khanna [FOCS'09, Theory Comput.'12]. While there is a rich literature on point-to-point settings of SNDP, the viable case of connectivity between subsets is still relatively poorly understood.

    This paper concerns the generalization of SNDP into the subset-to-subset setting, namely Group EC-SNDP. We develop the framework, which yields the first non-trivial (true) approximation algorithm for Group EC-SNDP. Previously, only a bicriteria approximation algorithm is known for Group EC-SNDP [Chalermsook, Grandoni, and Laekhanukit, SODA'15], and a true approximation algorithm is known only for the single-source variant with connectivity requirement $k(S,T)\in\{0,1,2\}$ [Gupta, Krishnaswamy, and Ravi, SODA'10; Khandekar, Kortsarz, and Nutov, FSTTCS'09 and Theor. Comput. Sci.'12].
\end{abstract}

\section{Introduction}
\label{sec:intro}

In the {\em survivable network design} problem (SNDP), we are given a graph $G=(V,E)$ with non-negative edge-costs $c:E\rightarrow\=R_{\ge 0}$ and a connectivity requirement between each pair of vertices $k:V\times V\rightarrow\=Z_{\ge 0}$. A pair of vertices $(s,t)$ with $k(s,t)>0$ is called a {\em demand-pair}, and a vertex $s$ that has a positive demand toward some vertex is called a {\em terminal}.
The goal in SNDP is to find a minimum-cost subgraph $H\subseteq G$ that has $k(s,t)$ edge (or vertex) disjoint paths connecting every demand-pair $(s,t)$.
As SNDP captures the design of communication networks that can operate under failure conditions, this problem has been a focus of attention for many decades since its initial study in the late '60s \cite{SteiglitzWK1969}.
There have been many variants of SNDP, e.g., edge-connectivity SNDP (EC-SNDP), vertex-connectivity SNDP (VC-SNDP), where the goal is to connect demand-pairs by edge-disjoint paths and (openly) vertex disjoint paths, respectively.
A number of studies have been devoted to SNDP, culminating in the discovery of a $2$-approximation algorithm for EC-SNDP by Jain \cite{Jain01}, and an $O(k^3\log n)$-approximation algorithm for VC-SNDP by Chuzhoy and Khanna \cite{ChuzhoyK12}, where $k$ is the maximum connectivity requirement.

While EC-SNDP and VC-SNDP are decent models that capture many difficulties in designing a highly reliable network, these models do not address network design beyond {\em point-to-point}. Specifically, the classical setting of SNDP concerns only the survivability of communication between pairs of nodes in a network. However, in many applications, e.g., multicasting over an overlay network, distributed data center, and global routing in VLSI design, communications are generally taken place between two groups of nodes rather than point-to-point. These relatively modern applications require the generalization of SNDP where survivability is required in the communication between two communities or, in other words, in the {\em group-to-group} setting.
We call the later model the {\em group-connectivity survivable network design problem} or {\em Group-SNDP} (resp., Group EC-SNDP for edge-connectivity).

Similar to the point-to-point network design, where the models are derived from the {\em Steiner tree} problem. The basic building block of community-to-community network design lays on the classical {\em group Steiner tree} problem (GST), where we are given subsets of vertices, called {\em groups}, and the goal is to find a minimum-cost tree that spans all the groups, i.e., the tree must connect to at least one vertex from each group.
The generalization of GST into the fault-tolerant settings, namely the {\em fault-tolerant GST} or {\em $k$-edge-connected} GST ($k$-EC-GST), have been studied in \cite{KhandekarKN12,GuptaKR10,ChalermsookGL15}, and the generalization into pairwise group-connectivity, namely Group-SNDP (a.k.a., {\em survivable set-connectivity}) has been studied in \cite{ChalermsookGL15}.
These studies culminated in a polylogarithmic approximation algorithm for $k$-EC-GST when $k=2$ \cite{GuptaKR10}, and a bicriteria approximation algorithm for the general edge-connectivity demands of Group SNDP (Group EC-SNDP) \cite{ChalermsookGL15}. Nevertheless, to the best of our knowledge, there was no known ``true'' (non-trivial) approximation algorithm for neither Group EC-SNDP nor $k$-EC-GST for $k\geq 3$.
On the negative side, Group EC-SNDP or even $k$-EC-GST is known to admits no $k^{\sigma}$-approximation algorithm, for some fixed constant $0 < \sigma < 1$ unless $\P=\NP$ \cite{ChalermsookGL15}, and when $k$ is very large the problem admits no $O(q/\log q)$-approximation algorithm unless $\NP=\ZPP$ \cite{LiaoCLZ2022}.

This paper revisits the generalization of survivable network design problem in the group-connectivity setting, namely Group EC-SNDP. To be formal, in the Group EC-SNDP we are given an undirected graph $G=(V,E)$ with non-negative edge-costs, a collection of pairs of subsets of vertices $(S_1,T_1),\ldots,(S_q,T_q)\subseteq V\times V$ with prescribed connectivity requirements $k_i \in\=Z_{\ge 0}$ for $i=1,2,\ldots,q$. The goal is to find a minimum-cost subgraph $H\subseteq G$ that has $k_i$ edge-disjoint paths between every pair of subsets $(S_i,T_i)$.

We present the first non-trivial (true) approximation algorithm for Group EC-SNDP, thus solving a long-standing open problem in the area of network design.
The approximation guarantee of our algorithm is $O(k^2\log k \log^6 n (k\log n+ \log q)))$, where $k$ is the largest connectivity requirement.
In particular, it is {\bf $O(k^3\log k \log^7 n)$} when $q=n^{O(k)}$, which resemblances the best known approximation ratio of $O(k^3\log n)$ for VC-SNDP upto the polylogarithmic term. Notice that the number of demand-pairs $q$ can be super-polynomial in $n$, and in this case, when $q\geq n^{\omega(k)}$, the approximation ratio of our algorithm becomes $O(k^2\log k\log^6n\log q)$.

\subsection{Related Work}
\label{sec:intro:related}

The study of survivable network design problems was initiated in the late '60s \cite{SteiglitzWK1969}.
Since then many variants and generalizations have been modeled to capture a wide range of situations, e.g., edge-connectivity (EC-SNDP), vertex-connectivity (VC-SNDP), and element-connectivity (Elem-SNDP), subset-to-subset connectivity (Group-SNDP) and connectivity in directed graphs (Directed-SNDP).
Please see \cite{KerivinM05-Survey} and \cite{Nutov18a-Survey} for references therein.


EC-SNDP is the most well-studied among the problems in the class of survivable network design. It has been extensively studied in the '90s (see, e.g., \cite{WilliamsonGMV93,GoemansGPSTW94,Jain01}), culminating in the discovery of a $2$-approximation algorithm via iterative rounding method in the breakthrough result of Jain \cite{Jain01}. The same technique generalizes to Elem-SNDP in the work of  Fleischer, Jain, and Williamson \cite{FleJW06}, thus giving a factor-two approximation algorithm for this variant as well.
In contrast to EC-SNDP and Elem-SNDP, researchers have been struggling in developing approximation algorithms for VC-SNDP.
To the best of our knowledge, there is only one known non-trivial approximation algorithm for VC-SNDP, which was discovered decades later in the work of Chuzhoy and Khanna \cite{ChuzhoyK12}, giving an $O(k^3\log n)$-approximation algorithm to the problem, where $k$ is the largest connectivity requirement.
One reason for the difficulty of VC-SNDP is due to the hardness derived from the Label-Cover problem.
Assuming $\NP\not\subseteq \DTIME(n^{\text{polylog}(n)})$, almost all reasonable approximation ratios (i.e., $2^{\log^{1-\epsilon}n}$, for $\epsilon>0$) have been ruled out by the work of Kortsarz, Krauthgamer, and Lee \cite{KortsarzKL04}, and the approximation ratios independent of $k$ are ruled out by the work of Chuzhoy, Khanna, and Chakraborty \cite{ChakrabortyCK08}.
A more refined approximation of hardness was later shown in \cite{Lae2014}.
Nevertheless, a special case of VC-SNDP called $k$-connected (spanning) subgraph problem admits constant factor approximation ratios for almost every parameter \cite{CheriyanV14,FukunagaNR15,Nutov20-kVCSS} except in the large connectivity regime, which still have polylogarithmic factor \cite{FakcharoenpholL12,Nut2012}.
Please see the recent comprehensive survey on VC-SNDP by Nutov \cite{Nutov18a-Survey}.

On directed graphs, both EC-SNDP and VC-SNDP are equivalent as there are polynomial-time reductions from one to the other. Thus, we denote them simply by Directed-SNDP.
The problem seems to be much more difficult as it admits almost no approximation ratios in $n$ due to the work of Dodis and Khanna \cite{DodS1999}, which gives a reduction from the notorious Label-Cover problem to Directed-SNDP. In fact, the hardness result holds even when all the connectivity requirements are $\{0,1\}$. The bounds have been improved in the subsequent works to $n^{o(1)}$ under Gap-ETH in the work of Dinur \cite{Dinur16} and has also been refined in \cite{CheLNV2014,Lae2014,DinM18,Man2019,LiaoCLZ2022}. To date, we know that even in the special case of single-source $k$-connectivity, the problem is at least as hard as the Label-Cover problem \cite{CheLNV2014}, and in the very recent work, Liao, Chen, Laekhanukit and Zhang showed that Directed-SNDP admits no non-trivial approximation algorithms unless $\NP=\ZPP$. To be more precise, Directed-SNDP admits neither $o(q/\log q)$ nor $o(2^{k/2}/k)$ approximation algorithms unless $\NP=\ZPP$, where $q$ is the number of (positive) demand-pairs. Assuming the {\em Strongish Planted Clique Hypothesis} \cite{ManRS21}, even $o(q)$-approximation algorithm has been ruled out \cite{LiaoCLZ2022}. Despite the difficulties, some special cases of Directed-SNDP admits a polylogarithmic approximation factor \cite{ChaLWZ2020,Nutov21-QkDST}.

    The Group SNDP is far less understood than other models for both edge and vertex connectivity variants. The study of the classical group Steiner tree problem is shown in \cite{GargKR00} and the generalization to the special case of single-source two-edge-connectivity was studied in \cite{KhandekarKN12,GuptaKR10}, resulting in polylogarithmic approximation algorithms for these two cases. A polylogarithmic approximation algorithm for general demands is also known in a restricted setting of low treewidth graphs \cite{ChalermsookDELV18}. However, prior to our paper, only a bi-criteria approximation algorithm is known for Group EC-SNDP on general graphs \cite{ChalermsookGL15}. Similar to Directed-SNDP, when the connectivity requirements are very large, Group EC-SNDP admits no non-trivial approximation. That is, no $o(q/\log q)$-approximation algorithm exists unless $\NP=\ZPP$, and no $o(q)$-approximation algorithm exists unless the Strongish Planted Clique Hypothesis is false \cite{LiaoCLZ2022}. When focusing on the hardness ratio in terms of $k$, it is shown in \cite{ChalermsookGL15} (combined with \cite{Lae2014} and the improvement in \cite{Man2019}) that the approximation hardness is $k^{1/5-\epsilon}$, for $\epsilon>0$, assuming $\NP=\ZPP$.

\subsection{Result and Contribution}
\label{sec:intro:results}

As mentioned earlier, the main result in this paper is the first non-trivial (true) approximation algorithm for Group EC-SNDP.

\begin{theorem}[Main Result]
\label{thm:main}
The group edge-connectivity survivable network design problem admits a polynomial-time (randomized) $O(k^2\log k \log^6 n (k\log n+ \log q))$-approximation algorithm, where $k$ is the largest pairwise connectivity requirement. In particular, the approximation ratio becomes $O(k^3\log k\log^7 n)$ when $q \le n^{O(k)}$.
\end{theorem}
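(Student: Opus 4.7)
My plan is to follow the iterative augmentation framework standard for EC-SNDP and VC-SNDP, adapted to the group setting. I would build the solution $H$ in $k$ rounds; at the start of round $j\in\{1,\ldots,k\}$ the invariant is that every pair $(S_i,T_i)$ with $k_i\ge j$ already has group edge-connectivity at least $j$ in $H$, and the task is to augment $H$ by a cheap edge set $F\subseteq E\setminus H$ that promotes the connectivity to $j+1$ for every pair with $k_i\ge j+1$. Each round has a natural LP relaxation $\min\sum_e c_e x_e$ subject to $\sum_{e\in\delta_{E\setminus H}(A)}x_e\ge 1$ for every deficient cut $A$ (meaning $S_i\subseteq A$, $T_i\subseteq\bar A$, and $|\delta_H(A)|=j$), which is solvable in polynomial time via a max-flow separation oracle.

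The heart of the algorithm is the LP rounding in each round. I would first bound the LP cost of a single augmentation round by $O(k\log k)$ times the global optimum, using the standard observation that an integral $k$-connected solution can be scaled to fractionally support any single-level augmentation after partitioning the pairs by residual connectivity. Then I would round the augmentation LP by embedding the fractional flow into a hierarchical decomposition (for instance, a Fakcharoenphol--Rao--Talwar tree embedding, losing $O(\log n)$) and running a Garg--Konjevod--Ravi style recursive LP rounding on the hierarchy. Because each deficient cut in the original graph may correspond to a bundle of tree edges stacked across several levels of the hierarchy rather than to a single edge, I would stratify the fractional solution by depth, pay $O(\log n)$ per depth level, and sum over $O(\log n)$ levels; combined with the embedding loss and the GKR rounding loss this yields an $O(\log^6 n)$ overhead for the augmentation round. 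The $(k\log n+\log q)$ factor would then arise from the final randomized rounding: a union bound over the $q$ demand pairs contributes $\log q$, while a union bound over the at most $n^{O(k)}$ distinct min-cuts of size $k$ contributes $O(k\log n)$.

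Summed over the $k$ augmentation rounds with the $O(k\log k)$ LP-cost overhead per round, the total approximation becomes the claimed $O(k^2\log k\log^6 n(k\log n+\log q))$. The principal obstacle I anticipate is the absence of uncrossability in the group setting: in EC-SNDP the family of tight cuts forms a ring family, which is exactly what powers Jain's $2$-approximation via iterative rounding. In the group setting the LP constraints are indexed by triples $(S_i,T_i,A)$ with $S_i\subseteq A$ and $T_i\subseteq\bar A$, whose tight cuts interact in complex ways that break skew-supermodularity, so Jain's machinery is unavailable; this forces the use of a hierarchical rounding scheme with polylogarithmic loss, which is precisely why the approximation carries $\log^6 n$ rather than a constant. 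A secondary difficulty is that the bundle-of-edges phenomenon described above is coupled across demand pairs, so the stratification-by-depth step must be executed uniformly over all pairs at once, which is the reason for introducing the extra $\log n$ factors per level rather than handling each pair independently.
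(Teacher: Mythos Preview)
Your high-level framework (connectivity augmentation losing $O(\log k)$, solve an LP each round, embed into trees, GKR-style rounding, union bound over $q$ pairs and $n^{O(k)}$ cuts) matches the paper's outer skeleton. But the proposal is missing the paper's central technical contribution, and as written it would reproduce only the \emph{bicriteria} result of Chalermsook--Grandoni--Laekhanukit rather than a true approximation. The obstacle you do not address is that the embedding is \emph{lossy for connectivity}: R\"acke's capacity embedding has congestion $\beta=O(\log n)$, so an $\ell$-edge set $F\subseteq H$ can map to a tree edge set $\map^{-1}(F)$ of total capacity $\beta\cdot\widetilde{x}(F)$. If the tight $(S_i,T_i)$-cut is supported on edges with $x_e$ near $1$, then removing $\map^{-1}(F)$ from the tree can completely disconnect the pair there, and no amount of GKR rounding recovers an $(\ell{+}1)$st $G$-path avoiding $F$. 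Your ``stratify by depth'' device controls rounding cost, not connectivity already destroyed by the embedding; and FRT (distance) embedding is the wrong tool here, since the argument needs the cut/flow guarantee of R\"acke's capacity embedding.

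The paper's fix has three ingredients absent from your sketch. First, it buys all edges in $\LargeVal=\{e:x_e\ge 1/(4\ell\beta)\}$ outright and then \emph{caps} their capacity to $1/(4\ell\beta)$ before embedding; this forces $\widetilde{x}(F)=1/(4\beta)$ for every $\ell$-edge $F$, so by Markov a random R\"acke tree is ``good'' for $F$ (i.e., $y(\map^{-1}(F))\le 1/2$) with probability $\ge 1/2$. Second, even on a good tree the pair can still be disconnected; the paper shows this only happens when some connected component of $H\setminus F$ is \emph{shattered} in the tree, and it replaces the naive demand $(S_i,T_i)$ by component-level tree-demand-pairs built from $Q_{S_i},Q_{T_i}$ and partitions of the shattered components. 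Third, the capacity cap implies at most $2\ell\beta$ components can be shattered on a good tree, so the total number of tree-demand-pairs is $\le q\cdot n^{2\ell}\cdot 2^{2\ell\beta}=q\cdot n^{O(\ell)}$, and each has flow $\ge 1/(4\ell\beta)$ in $\+T\setminus\map^{-1}(F)$. Only with these pieces does the union-bound you wrote actually go through and produce the $(k\log n+\log q)$ factor; without them, ``embed and round'' is exactly what the prior work did and it provably falls short of the target connectivity by a $\beta$ factor.
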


The key ingredient is a capacity-based probabilistic tree-embedding by R\"{a}cke \cite{Rac2008}, which is also used in the previous work by Chalermsook, Grandoni and Laekhanukit \cite{ChalermsookGL15}. However, Chalermsook~et~al. were not able to derive a true approximation algorithm from the capacity-based tree-embedding because such a mapping is ``lossy''. More precisely, their algorithm is a randomized LP-rounding algorithm that runs on a tree distribution, which is an embedding of a fractional solution.
However, due to the distortion, although there is an $(S_i,T_i)$-flow of value $k_i$ in the tree distribution, when we map it back to the original graph, we can guarantee only a flow of value at least $\Omega(k/\beta)$, where $\beta=O(\log n)$ is the capacity-distortion of R\"{a}cke's tree-embedding. Consequently, the algorithm in \cite{ChalermsookGL15} is only guaranteed to output a solution with connectivity at least $\Omega(k/\log n)$.

To circumvent the issue of lossy embedding, we are required to invoke two techniques. Firstly, we apply a weight-updating technique similar to that in the multiplicative weight update method. (See the survey by Arora, Hazan, and Kale \cite{AroraHK12} for more details.)
Every time we add new edges to the partial solution, we update the weights (capacities) of the bought edges by scaling their capacities down so that the distortion on these edges has only mild effects toward distortion. However, the weight-update is still insufficient for us to reach the desired connectivity. This is because a cut consisting of bought edges alone may have its capacities scaled down too much so that the desired connectivity in the tree cannot be guaranteed.
Thus, the connectivity issue remains and we need an additional ingredient.
To this end, we observe that all the cuts in the graph are present in the embedded trees in the form of leaves-to-leaves paths. Hence, to reach the desired connectivity, we can simply connect all of them simultaneously whenever they are {\em connectivity deficient}.
It is easier said than to be done as the number of tree-demand-pairs that we need to {\em cover} (that is, adding paths to satisfy the connectivity requirement) is, in general, exponential on the number of vertices. Thus, even if we have a logarithmic approximation algorithm for the underlying network design problem on trees (which is called {\em subset-connectivity} in \cite{ChalermsookGL15}), it can only give a polynomial approximation ratio.
Nevertheless, we are able to show that the number of tree-demand-pairs needed to be covered is at most $q\cdot n^{2k}\cdot 2^{\poly(k)\beta}$, which is $qn^{\poly(k)}$ when $\beta=O(\log n)$.
As a consequence, a polylogarithmic approximation algorithm for the subset-connectivity problem on trees in \cite{ChalermsookGL15} implies a $\poly(k)\polylog(n)$ approximation algorithm for Group EC-SNDP.

The techniques that we developed yield a framework that turns a ``lossy'' probabilistic capacity mapping into a ``lossless'' network design algorithm. Hence, it can be applied to a more general setting whenever a lossy capacity-based probabilistic tree-embedding exists, provided that the distortion is $\beta < n^{1/3}$.
The general form of our result is as follows.

\newpage

\begin{theorem}[Network Design via Lossy-Embedding]
\label{thm:general-result}
Suppose there exists a probabilistic capacity mapping that maps a capacitated directed or undirected $n$-vertex graph $G$ into a distribution of tree such that
\begin{itemize}
    \item The congestion in expectation is $\beta$;
    \item The height of all tree is $O(\log (nC))$, where $C$ is the ratio of the largest to smallest capacity.
\end{itemize}
Then there exists a randomized $O(\beta^2 k^2\log k \log^4 n(k\log n + k\beta + \log q))$ approximation algorithm, where $k$ is the largest pairwise connectivity requirement and $q$ is the number of demand-pairs.
\end{theorem}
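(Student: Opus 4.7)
The plan is to follow the same high-level strategy that yields Theorem~\ref{thm:main} but parameterize every bound by $\beta$ instead of the $O(\log n)$ coming from R\"acke's specific embedding. I will start from a fractional optimum of the standard cut-LP on $G$ and push it through the given probabilistic capacity mapping to obtain a distribution over trees of height $h=O(\log(nC))$ such that, in expectation, every cut in every sampled tree is inflated by a factor at most $\beta$ relative to $G$. The natural approximation-preserving reduction then says: if I can solve, on the sampled trees, a tree-instance whose demand-pair set $\widetilde{\mathcal{D}}$ consists of \emph{every} leaf-to-leaf pair that separates some original demand group $(S_i,T_i)$, with the induced tree-connectivity requirement, then mapping the tree solution back to $G$ costs at most $\beta$ times what I pay on the trees. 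The known polylogarithmic approximation for subset-connectivity on trees of Chalermsook, Grandoni, and Laekhanukit applies to such an instance at cost $O(\log|\widetilde{\mathcal{D}}|\cdot h^{O(1)})$ times optimum.

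Next I would bound $|\widetilde{\mathcal{D}}|$. Following the idea sketched in the introduction, a tree-demand-pair corresponds to a cut in $G$ that separates some $(S_i,T_i)$ while having value less than $k_i$; the number of near-minimum cuts in a capacitated graph together with the number of ways such a cut can be realized across a tree embedding of height $h$ can be shown to be at most $q\cdot n^{2k}\cdot 2^{\poly(k)\beta}$. Plugging $h=O(\log(nC))$ and $|\widetilde{\mathcal{D}}|$ into the tree-subset-connectivity bound gives an approximation of $O(\beta\cdot \poly(k)\cdot \log^{O(1)}n\cdot(k\log n+k\beta+\log q))$ for a \emph{single} round, but only with the caveat that the flow routed back to $G$ can lose a factor $\beta$ per edge, so connectivity is only guaranteed up to $\Omega(k/\beta)$.

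To close this gap and avoid the lossy-embedding obstacle that stopped \cite{ChalermsookGL15}, I would wrap the above reduction in a multiplicative-weight-update loop, as hinted at with \cite{AroraHK12}. In round $t$ the algorithm maintains a capacity vector $u^{(t)}$ on $E(G)$, runs the tree-embedding-plus-tree-subset-connectivity subroutine on $(G,u^{(t)})$, adds the purchased edges $F^{(t)}$ to the partial solution $H$, and then scales down the capacities of edges in $F^{(t)}$ by a $(1-\eta)$ factor (for a carefully chosen $\eta=\Theta(1/(k\beta))$). The invariant I want to maintain is that every uncovered cut still has $u^{(t)}$-value at least $k$ so that the embedding's fractional solution continues to route $k$ units, while the total weight charged across rounds telescopes to $O(k\beta\log n)$ times a single-round cost. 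The number of rounds needed to make every cut satisfied in $H$ is therefore $O(k\beta\log n)$, and this is exactly the extra $k\beta$ factor appearing in the final bound.

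The main obstacle, which I expect to occupy the bulk of the formal argument, is making the weight-update loop honest: I must show that after scaling, the residual LP still supports $k$ units of flow across every deficient cut, that the subset-connectivity tree instance one constructs from the scaled capacities still has the $q\cdot n^{2k}\cdot 2^{\poly(k)\beta}$ bound on $|\widetilde{\mathcal{D}}|$ (so that the cost of each round is controlled), and that the scaling never over-penalizes an edge the adversary's cuts all share. Once this invariant is established, charging the rounds by a standard MWU potential argument and multiplying the per-round $O(\beta\cdot\poly(k)\cdot\log^{O(1)}n)$ cost by the $O(k\beta\log n)$ round bound, together with the $\log|\widetilde{\mathcal{D}}|=O(k\log n+k\beta+\log q)$ term from the tree algorithm, yields exactly $O(\beta^2k^2\log k\log^4 n(k\log n+k\beta+\log q))$, completing the theorem.
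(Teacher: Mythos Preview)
Your plan diverges from the paper in a structural way, and the divergence is where the gap lies. You propose an iterative multiplicative-weight-update loop in which, after each tree-rounding pass, you scale the capacities of purchased edges by $(1-\eta)$ and re-embed. You correctly identify the key obstacle---showing that after scaling the residual LP still routes $k$ units across every deficient cut---but you do not resolve it, and in fact this invariant is not obviously true: a deficient cut may consist mostly of already-bought edges, so repeated $(1-\eta)$ scaling can drive its $u^{(t)}$-value well below $k$. The paper's introduction mentions ``a weight-updating technique similar to MWU,'' but this is not an iterative MWU loop at all; it refers to a one-shot capacity cap applied inside a connectivity-augmentation framework, and the combination of those two ideas is what actually closes the gap.

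Concretely, the paper works level by level (augment from $\ell$-edge-connected to $(\ell{+}1)$-edge-connected, contributing the $\log k$ factor via \Cref{lem:augmentation}). At level $\ell$ it first \emph{buys outright} every edge with $x_e\ge 1/(4\ell\beta)$ (the set $\LargeVal$), paying $O(\ell\beta)$ times LP, and then caps those edges' capacities to exactly $1/(4\ell\beta)$ before building the tree distribution \emph{once}. The point is that any $\ell$-edge set $F$ now has total capped capacity $\le 1/(4\beta)$, so with probability $\ge 1/2$ a sampled tree is ``good'' for $F$ (removing $\map^{-1}(F)$ kills at most $1/2$ unit of flow). The connectivity certificate is then not a path from $S_i$ to $T_i$ in the tree, but a \emph{component-level} path through the connected components of $H\setminus F$; the crucial observation is that only components that are \emph{shattered} in $\+T\setminus\map^{-1}(F)$ need to be enumerated as intermediate hops, and there are at most $2\ell\beta$ of them (\Cref{lem:num-shattered}). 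This yields the $q\cdot n^{2\ell}\cdot 2^{2\ell\beta}$ bound on tree-demand-pairs directly, not via any ``near-minimum cuts'' counting, and simultaneously guarantees each such pair carries flow $\ge 1/(4\ell\beta)$ in the good tree (\Cref{lem:flow-available}). Your proposal lacks both the augmentation scaffolding (which is what makes the one-shot cap sound: after removing $\ell$ edges the LP still has value $\ge 1$ on every relevant cut) and the shattered-component device (which is what makes the demand-pair count small while preserving flow).
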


In particular, 
if $\beta = O(1)$, then the ratio becomes $O(k^2\log k\log^4 n(k\log n + \log q))$.
If $\beta=O(\log n)$, then the ratio becomes $O(k^2\log k\log^6 n(k\log n + \log q))$.
If $\beta = O(n^c)$ where $c<1/3$, then the ratio becomes $O(\beta^2 k^2\log k \log^4 n (k\beta + \log q))$.

\Cref{thm:general-result} allows us to derive a slightly better approximation ratio on special classes of graphs, e.g., graphs with bounded pathwidth,  bandwidth or cutwidth \cite{BorradaileCEMN20} and $k$-outer planar graphs \cite{Yuval11}.
Note that here we exploit the equivalence between distance and capacity-based probabilistic tree-embedding observed by Andersen and Feige \cite{AndersenFeige09}.

\begin{theorem}
Consider the Group EC-SNDP. There exist polynomial-time (randomized) approximation algorithms with approximation ratios:
\begin{itemize}
    \item $O(b^4 \cdot k^2 \log k \log^4 n (k\log n + \log q))$ for Group EC-SNDP on graphs with $b$-bounded pathwidth (resp., bandwidth and cutwidth), where $b$ is a constant.
    \item $O(c^{2k}\cdot k^2\log k \log^4 n(k\log n + kc^k + \log q))$ for Group EC-SNDP on $k$-outer planar graphs, where $c$ is a universal constant.
\end{itemize}
\end{theorem}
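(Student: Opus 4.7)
The plan is to instantiate \Cref{thm:general-result} with capacity-based probabilistic tree embeddings tailored to each graph class. Since existing results are generally stated in the distance-distortion setting, I would first invoke the equivalence of Andersen and Feige \cite{AndersenFeige09} to transfer a distance-distortion bound into a capacity-based congestion bound $\beta$ of the same asymptotic order. The stated approximation ratios then follow by plugging the resulting $\beta$ into \Cref{thm:general-result} and simplifying.

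For graphs of $b$-bounded pathwidth (and hence of $b$-bounded bandwidth and cutwidth, which both imply pathwidth $O(b)$), the embedding of Borradaile et al.~\cite{BorradaileCEMN20} yields a probabilistic tree embedding whose distance distortion is a fixed polynomial in $b$; in particular it is $O(b^2)$ and is constant for constant $b$. After transferring to the capacity setting via \cite{AndersenFeige09}, this gives congestion $\beta = O(b^2)$. Substituting into the $\beta = O(1)$ branch of \Cref{thm:general-result} while keeping track of the dependence on $b$ yields the ratio $O(b^4 \cdot k^2 \log k \log^4 n (k\log n + \log q))$.

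For $k$-outer planar graphs, Yuval's construction \cite{Yuval11} gives a probabilistic tree embedding with distance distortion $c^{O(k)}$ for a universal constant $c$. After the Andersen--Feige transfer, this yields $\beta = O(c^k)$, which is no longer constant. I would therefore apply the full general form of \Cref{thm:general-result}, namely $O(\beta^2 k^2 \log k \log^4 n (k\log n + k\beta + \log q))$, to obtain $O(c^{2k} \cdot k^2 \log k \log^4 n (k\log n + kc^k + \log q))$ as claimed.

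The main technical obstacle I foresee is verifying the tree-height hypothesis of \Cref{thm:general-result}, which requires the supported trees to have height $O(\log(nC))$. The cited constructions are primarily designed to optimize distortion and do not always come with a logarithmic-depth guarantee out of the box. I would address this either by first scaling the edge capacities to lie in a polynomial range (so that $\log(nC) = O(\log n)$) and then rebalancing each sampled tree into a hierarchical decomposition of logarithmic depth, incurring only a constant-factor blow-up in $\beta$, or by re-inspecting the recursions in \cite{BorradaileCEMN20,Yuval11} to show that the trees produced are already balanced. Either adjustment preserves the asymptotic congestion bounds above and completes the derivation.
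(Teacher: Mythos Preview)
Your proposal is correct and mirrors the paper's own justification: the paper states this theorem as a direct corollary of \Cref{thm:general-result}, invoking the Andersen--Feige equivalence \cite{AndersenFeige09} to convert the distance-based embeddings of \cite{BorradaileCEMN20} and \cite{Yuval11} into capacity-based ones with the same asymptotic congestion, and then substituting the resulting $\beta$ into the general bound. The tree-height verification you flag is a reasonable caution that the paper leaves implicit, but your proposed resolution (capacity scaling followed by rebalancing) is standard and does not alter the approach.
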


Lastly, we remark that a capacity-based probabilistic tree-embedding on {\em $\alpha$-balanced directed graphs} exists as shown in the work of Ene, Miller, Pachocki and Sidford \cite{EneMPS16}. However, their work is pertained to the single-source congestion minimization problem, which is not clear whether the technique fits in our framework. If it is applicable, perhaps with some modification, then our framework will imply a non-trivial approximation algorithm for the {\em singe-source directed $k$-edge-connectivity} problem (also called the {\em $k$-edge-connected directed Steiner tree} problem in \cite{GraL2017}), for all values of $k$, on $\alpha$-balanced graphs.
 

\section{Technique: Overview and Intuition}
\label{sec:overview}

We first discuss the idea used in the work of Chalermsook, Grandoni, and Laekhanukit \cite{ChalermsookGL15}.
The authors derived a bi-criteria approximation algorithm for Group EC-SNDP by first solving the standard LP-relaxation and then embedding the fraction solution into a probabilistic distribution of trees. They then iteratively round the fractional solution using the Garg-Konjevod-Ravi rounding algorithm for the group Steiner tree problem on trees \cite{GargKR00}.

Due to the {\em congestion} (i.e., capacity-distortion) $\beta=O(\log n)$ of the R\"{a}cke's capacity-based probabilistic tree-embedding, the solution could only be guaranteed to reach a connectivity at least $\Omega(k/\log n)$. It is quite interesting that the problematic edges are those that have large LP-values, say $x_e\geq 1/\beta$. This is because the effect of distortion on the capacities of these edges are large. Surprisingly, the good case where we can reach the target connectivity is when all the edges in the optimal LP-solutions have {\em small} LP-values, i.e., $x_e< 1/\beta$.

It is very counterintuitive that we wish for an LP-solution with no large LP-values because these edges are supposed to be ``good-to-have'' as we can trivially round $x_e$ to one and simply pay a factor $O(\log n)$ in the approximation ratio. This suggests the mix of {\em tree-rounding} and {\em trivial-rounding}. However, a straightforward approach would inevitably fail because the connectivity has already been lost in the tree distribution.
To be specific, let us consider a cut $(X,V\setminus X)$ with capacity exactly $k$, i.e., a {\em tight} cut that separates a demand-pair $(S_i,T_i)$.
Suppose there are $k/\beta$ edges with capacity $1$ that have congestion $\beta$ crossing $(X,V\setminus X)$.
Then this cut will appear to have a flow of values $k$ in the tree distribution.
However, even if we buy all the $(S_i,T_i)$-paths in the support of the distribution, it would form only $O(k/\beta)$ edge-disjoint paths in the original graph.

To circumvent this issue, we study the effect of distortion of large-capacity edges and analyze in detail for which cut will be capacity-deficit when we embed it to the tree distribution.
To simplify the discussion, fix an edge-set $F$ of $\ell \leq k-1$ edges in the graph.
If a demand-pair $(S_i,T_i)$ is not yet $k$-edge-connected in the current partial solution, then removing $F$ from the solution subgraph will disconnect them.
That is, $F$ is a certificate that the graph has not yet reached the desired connectivity.
The same applies to the tree-embedding.
If we remove edges in the embedded-tree corresponding to $F$ (i.e., an edge that maps to a path containing an edge in $F$) and the pair $S_i$ and $T_i$ are disconnected, then $S_i$ and $T_i$ are not $k$-edge-connected in the original graph even if we buy all the $(S_i,T_i)$-paths in the tree distribution.
Now, observe that any edge with capacity (which is its LP-value) less than $1/4\ell\beta$ will have no effect in separating $S_i$ from $T_i$ because even if it has capacity-distortion $\beta$, and we remove $k$ of them, it cannot possibly separate $S_i$ from $T_i$ in the tree-embedding.
Hence, we may assume the edges in the graph have capacities at most $1/4\ell\beta$ by simply scaling down the capacities of edges with {\em large} LP-values, or more precisely, we cap the capacity of any edge to be at most $1/4\ell\beta$.

Next, we observe the embedded tree after removing edges corresponding to the set $F$ of $\ell$ edges.
One can see that the cut that has a lot of edges with {\em small} LP-values will not be disconnected by removing $F$.
More precisely, if the total capacities of edges with small LP-values is at least $1$, then it will have a capacity of at least $1/2$ after removing $F$, and such a condition holds in every feasible LP-solution.
As such, the cuts that are ``shattered'' in the tree-embedding are those that are already $\ell$-edge-connected in the solution subgraph.

This observation is quite interesting for us as because it means that we can view a pair of shattered components as super nodes and ask only for edge-disjoint paths connecting these components. Towards this goal, we define new tree-demand-pairs based on the cuts and the embedding tree that are {\em shattered-free}.
Then it is not hard to see that joining these subset pairs through edge-disjoint paths will imply the connectivity in the original graph.
The last obstacle lies on the fact that the number of tree-demand-pairs can be as large as $2^n$ in general. The critical point in our argument is in bounding the number of ``shattered-free'' tree-demand-pairs and showing that every tree-demand-pair has enough flow for the rounding algorithm.
Once these are all settled, our algorithm is guaranteed to connect all the tree-demand-pairs within the claimed bound.


\section{Preliminary}
\label{sec:prelim:tree-embedding}


\paragraph*{Group Edge-Connectivity Survivable Network Design.} The group edge-connected survivable network design problem (Group EC-SNDP) is defined as follows: Given an undirected graph $G=(V,E)$ with non-negative edge costs $c:E\to\=R_{\ge 0}$, and a collection of $q$ demand-pairs $(S_1,T_1),\ldots,(S_q,T_q)$ with a connectivity requirement $k_i\in\=Z_{\ge 0}$ for each demand-pair $(S_i,T_i)$, the goal is to find a minimum-cost subgraph $H\subseteq G$ such that $H$ has $k_i$-edge-disjoint paths from $S_i$ to $T_i$ for every $i\in[q]$.

One may assume w.l.o.g. that the connectivity requirements are uniform.
To see this, let $k=\max_{i\in [q]}k_i$.
We add $k$ auxiliary edges with zero-cost to the graph, say $(a_{1},b_{1}),(a_{2},b_{2}),\ldots,(a_{k},b_{k})$.
Then, for each demand-pair $(S_i,T_i)$ with requirement $k_i<k$, we add $a_j$ to $S_j$ and $b_j$ to $T_j$ for $j=1,2,\ldots,k-k_i$.
This adds $k-k_i$ independent edges joining $S_i$ and $T_i$ and only increases the size of the instance by $k$ even when $q$ is superpolynomial in $n$.
It is not hard to see that any feasible solution to the modified graph with uniform requirement $k$ induces a feasible solution to the original graph with the same cost and vice versa.
As such, our discussion will focus on the uniform case.

Sometimes we consider more than one graph and, to clarify the notations, we use $V(\cdot)$ or $E(\cdot)$ to mean the vertex and edge set of a graph, respectively.
For a set $X\subseteq V$ of vertices, we denote by $\delta_G(X)$ the edge set between $X$ and $V\setminus X$.
The standard cut-based LP-relaxation of Group EC-SNDP is as follows.
For notational convenience, let $x(F)=\sum_{e\in F}x_e$ for a set of edges $F\subseteq E$.

\begin{equation}
    \label{eqn:lp}
    \begin{aligned}
    \min && \sum_{e\in E(G)}c_ex_e\\
    \text{s.t.}
      && \sum_{e\in\delta_G(X)}x_e \geq k
        && \forall i\in[q], \forall X:T_i\subseteq X \subseteq V\setminus S_i\\
      && 0 \leq x_e \leq 1
        && \forall e\in E(G)
    \end{aligned}
\end{equation}

\paragraph{Connectivity Augmentation.} Instead of directly rounding the fractional solution for Group EC-SNDP, we apply connectivity augmentation to design an algorithm. This approach is widely used in literature; see, e.g., \cite{WilliamsonGMV93,GoemansGPSTW94,KortsarzN05,CheriyanVV03,FakcharoenpholL12,Nutov12,Nutov18a-Survey}. In particular, we assume that we are given a partial solution, which is a subgraph $H_\ell=(V,E_\ell)$ of $G$ such that all $(S_i,T_i)$ are $\ell$-edge-connected.
The goal is to choose a minimum-cost edge-set $E^+\subseteq E\setminus E_{\ell}$ so that in $(V,E_\ell\bigcup E^+)$ all demand-pairs $(S_i,T_i)$ are $(\ell+1)$-edge-connected.
Starting with the trivial case that $\ell=0$, after $k$ rounds of augmentation, all $(S_i,T_i)$ become $k$-edge-connected.
It can be proved using LP-scaling technique that the connectivity augmentation only incurs an extra $O(\log k)$ factor in the approximation ratio.
See \cite{GoemansGPSTW94} for an example of analysis.

To this aim, it suffices to set $c_e=0$ for all $e\in E_\ell$ and set $k=\ell+1$ in the LP formulation \eqref{eqn:lp}.
Then \eqref{eqn:lp} becomes the standard cut-based LP for the augmentation problem.
In particular, we have the following lemma.

\begin{lemma} [Connectivity Augmentation]
\label{lem:augmentation}
Suppose there exists an $\alpha(n,k)$-approximation algorithm for the connectivity augmentation problem w.r.t. its LP-solution, i.e., it produces a feasible solution with cost at most $\alpha(n,k)\cdot z$, where $z$ is the cost of the optimal fractional LP-solution.
Then there exists an $O(\log k \cdot \alpha(n,k))$ for Group EC-SNDP.
\end{lemma}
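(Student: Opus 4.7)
The plan is the classical LP-scaling argument for iterative connectivity augmentation. First I would solve the LP~\eqref{eqn:lp} with target connectivity $k$ a single time to obtain an optimal fractional solution $x^*$ of cost $z^*=\mathrm{OPT}_{LP}$, which satisfies $z^*\le \mathrm{OPT}$. The algorithm then proceeds through $k$ stages indexed by $\ell=0,1,\ldots,k-1$: at the start of stage $\ell$ we hold a subgraph $H_\ell=(V,E_\ell)$ in which every demand-pair $(S_i,T_i)$ is already $\ell$-edge-connected, and the stage purchases a minimum-cost edge set that raises the connectivity to $\ell+1$.

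The key step is the LP-scaling claim: the augmentation LP at stage $\ell$ (which is~\eqref{eqn:lp} with $k$ replaced by $\ell+1$ and $c_e=0$ on $E_\ell$) has fractional optimum at most $z^*/(k-\ell)$. To certify this I would exhibit an explicit feasible solution $\hat x$ by setting $\hat x_e=1$ for $e\in E_\ell$ (which are free since their costs are zeroed out) and $\hat x_e=x^*_e/(k-\ell)$ otherwise. Feasibility reduces to a per-cut check: for any cut $X$ separating some demand-pair, write $j=|\delta_G(X)\cap E_\ell|$. Since $H_\ell$ is $\ell$-edge-connected we have $j\ge \ell$, and since $x^*_e\le 1$ and $x^*(\delta_G(X))\ge k$, we have $x^*(\delta_G(X)\setminus E_\ell)\ge k-j$. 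Thus
\[
\hat x(\delta_G(X)) \;\ge\; j + \frac{k-j}{k-\ell},
\]
and a short algebraic manipulation reduces the desired inequality $\hat x(\delta_G(X))\ge \ell+1$ to $(j-\ell)(k-\ell-1)\ge 0$, which holds because $j\ge \ell$ and $\ell\le k-1$. The cost of $\hat x$ is exactly $z^*/(k-\ell)$.

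Having established the bound, at each stage I would invoke the given $\alpha(n,k)$-approximation algorithm on the augmentation LP, obtaining an augmenting edge set of cost at most $\alpha(n,k)\cdot z^*/(k-\ell)$, and append it to $E_\ell$ to form $H_{\ell+1}$. After $k$ stages the final solution is $k$-edge-connected on every demand-pair, and summing the per-stage bounds gives total cost
\[
\sum_{\ell=0}^{k-1}\frac{\alpha(n,k)\cdot z^*}{k-\ell}\;=\;\alpha(n,k)\cdot z^*\cdot\sum_{i=1}^{k}\frac{1}{i}\;=\;O\bigl(\log k\cdot \alpha(n,k)\bigr)\cdot \mathrm{OPT},
\]
which is the claimed ratio. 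The only genuine obstacle is the per-cut feasibility check; the subtlety is that the pure scaling $x^*/(k-\ell)$ does not by itself satisfy the augmentation constraints and must be combined with the already present $\ell$-edge-connectivity of $E_\ell$, invoking both $x^*_e\le 1$ (to lower-bound the fractional mass outside $E_\ell$) and $j\ge \ell$ (to lower-bound the integral contribution from $E_\ell$). Once this is in place, the harmonic blow-up over the $k$ stages is immediate.
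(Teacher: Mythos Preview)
Your proposal is correct and is precisely the LP-scaling argument the paper has in mind: the paper does not give its own proof of this lemma but simply cites the standard technique from \cite{GoemansGPSTW94}, and your write-up fills in exactly that argument. The per-cut feasibility check $j+(k-j)/(k-\ell)\ge \ell+1$ via $(j-\ell)(k-\ell-1)\ge 0$ is the right crux, and the harmonic summation over stages is the intended conclusion.
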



\subsection{R\"{a}cke's Probabilistic Tree-Embedding}

Our algorithm uses a tree-embedding with congestion $\beta$ as a black box.
The main result is proved by the fact that R\"{a}cke's probabilistic tree-embedding has congestion $\beta=O(\log n)$ in expectation. In this section, we give the formal definition of R\"{a}cke's probabilistic tree embedding. Let $G=(V,E)$ with capacity $\widetilde{x}:E\rightarrow\=R_{\ge 0}$. Denote a tree embedding of $(G,\widetilde{x})$ by $(\+T,\+M,y)$, where $\+T$ is a tree, $\+M$ is a mapping from $V(\+T)\bigcup E(\+T)$ to $V(G)\bigcup 2^{E(G)}$, and $y$ is the corresponding capacity function on the edges of $\+T$. For each node $v$ of $\+T$, $\+M(v)$ is some vertex in $G$.
Particularly, $\+M$ induces a one-to-one mapping between the leaves of $\+T$ and the vertices of $G$.
For each edge $f=(u,v)$ in $\+T$, its capacity is $y(f)=\sum_{e\in\delta_G(X)}\widetilde{x}_e$, where $(X,V\setminus X)$ is a partition induced by the leaves of $\+T-f$.
In addition, $\+M$ maps $f$ to a path in $G$ between $\+M(u)$ and $\+M(v)$.

We will use following notations in our analysis repeatedly. For each vertex $v$ of $G$, $\+M^{\text{-1}}(v)$ is the corresponding leaf of $\+T$.
Let $\+M^{\text{-1}}(X)=\bigcup_{v\in X}\+M^{\text{-1}}(v)$ for $X\subseteq V(G)$.
For each edge $e$ in $G$, let $\+M^{\text{-1}}(e)=\set{f\in E(\+T) : e \in\+M(f)}$ and let $\+M^{\text{-1}}(F)=\bigcup_{e\in F}\+M^{\text{-1}}(e)$ for $F\subseteq E(G)$.

\paragraph{Congestion.}
For each edge $e\in E(G)$,  the load of $e$ on a tree $\+T\in\calD$ is defined as the sum of the capacities of edges $\+M^{\text{-}1}(e)$, i.e., $\load(e) = \sum_{f\in\+M^{\text{-}1}(e)} y(f)$. Let the relative load be $\rload(e)=\load(e)/\widetilde{x}_e$. For a tree embedding $(\+T,\+M,y)$, the \emph{congestion} of $G$ is $\max_{e\in E(G)}\{\rload(e)\}$. Finally, we define $\beta$ to be the expected congestion:
$$
\beta:= \max_{e\in E(G)}\Exp_{\+T\sim\calD}[\rload(e)],
$$
where $\+D$ is the probability distribution on trees given by R\"acke's tree-embedding.

\begin{theorem}[\cite{Rac2008}]
\label{thm:racke08-embedding}
There exists a probabilistic embedding of an $n$-vertex graph $G$ with edge capacities into a tree with expected congestion at most $\beta=O(\log n)$.
\end{theorem}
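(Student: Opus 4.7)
The plan is to prove this via LP duality combined with a hierarchical cut decomposition. First, I would set up the problem as a two-player zero-sum game: the tree player chooses a tree embedding $\tp{\+T,\+M,y}$, the adversary picks an edge $e\in E(G)$, and the payoff is $\rload(e)$. By von Neumann's minimax theorem (or, constructively, by the multiplicative weights update method), producing a distribution with expected congestion $\beta=O(\log n)$ reduces to designing an ``oracle'' that, for every edge-weight function $w:E(G)\to\=R_{\ge 0}$, outputs a single tree embedding satisfying $\sum_{e\in E(G)} w_e\cdot \rload(e)\le O(\log n)\cdot \sum_{e\in E(G)} w_e$. Running MWU for $O(\log n/\eps^2)$ rounds, promoting the weight of any edge whose relative load is high after each oracle call, then outputs a distribution (essentially uniform over the produced trees) that achieves the theorem's bound up to a $(1+\eps)$ factor.

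The technical core is the oracle. Exploiting the Andersen--Feige equivalence between distance- and capacity-based tree embeddings, I would assign each edge a ``length'' $\ell(e)=w_e/\widetilde{x}_e$, let $d_\ell$ denote the induced shortest-path metric on $V(G)$, and run an FRT-style probabilistic low-diameter decomposition at geometrically increasing scales $2^0,2^1,\ldots,2^{O(\log(nC))}$. The resulting laminar family of partitions yields $\+T$ as its inclusion tree (vertices of $G$ become the leaves of $\+T$ via $\+M$); each tree edge $f$ separating a cluster $X$ from its parent receives the required capacity $y(f)=\sum_{e\in\delta_G(X)}\widetilde{x}_e$ from the statement, and $\+M(f)$ is taken to be any fixed $G$-path between canonical representatives of the two sides of the cut. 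The tree has height $O(\log(nC))$ by construction. The standard FRT guarantee, namely that an edge of ``length'' $\ell(e)$ is cut at level $i$ with probability $O(\log n)\cdot \ell(e)/2^i$, yields after a telescoping sum that $\Exp\sqtp{\rload(e)}=O(\log n)\cdot w_e/\widetilde{x}_e$; multiplying by $w_e$ and summing over $e$ gives the desired oracle bound.

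The main obstacle I anticipate is the careful bookkeeping of $\rload(e)$, which accumulates contributions $y(f)/\widetilde{x}_e$ over every tree edge $f$ whose chosen path $\+M(f)$ traverses $e$. Making the level-by-level analysis work requires that the capacity $y(f)=\sum_{e'\in\delta_G(X)}\widetilde{x}_{e'}$ can actually be routed across $\delta_G(X)$ with low per-edge congestion (handled by picking $\+M(f)$ so that the expected contribution of level $i$ to $\rload(e)$ matches $\Prob\sqtp{e\text{ is cut at level }i}$, possibly by taking $\+M(f)$ to a random cut-edge weighted by $\widetilde{x}$), and that the laminarity of the decomposition prevents the same edge from being double-counted across incompatible cuts at the same level. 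A secondary issue is enforcing the stated height $O(\log(nC))$ when capacities span a wide range; I would snap $\widetilde{x}_e$ to powers of two (losing only a constant factor) so the FRT recursion bottoms out after $O(\log n+\log C)$ levels. Once the oracle is in place, the minimax/MWU step automatically promotes it to the claimed probabilistic embedding with $\beta=O(\log n)$.
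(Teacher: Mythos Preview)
The paper does not prove this theorem at all; it is quoted from \cite{Rac2008} and used as a black box (the surrounding text only records the properties of the embedding that the algorithm needs, such as the flow inequalities and \Cref{lem:racke-height}). So there is no ``paper's own proof'' to compare against.

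For what it is worth, your outline does follow the shape of R\"acke's original argument: a minimax/MWU reduction to a weighted oracle, and then an FRT-style hierarchical decomposition to build the oracle tree. One imprecision: the claim $\Exp\sqtp{\rload(e)}=O(\log n)\cdot w_e/\widetilde{x}_e$ is not what the FRT analysis yields (and is dimensionally off, since $\rload(e)$ is a pure ratio while $w_e/\widetilde{x}_e$ is a length). What the decomposition actually gives is only the \emph{averaged} inequality $\sum_{e} w_e\cdot\rload(e)\le O(\log n)\sum_e w_e$, which is exactly your oracle condition; no per-edge bound holds in general. Also, invoking the Andersen--Feige equivalence as a step inside the proof is slightly anachronistic (that paper postdates \cite{Rac2008}), though mathematically it captures the same duality R\"acke exploits.
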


The embedding also guarantees some properties about flows.
Let $\flow_G^{\widetilde{x}}(A,B)$ denote the maximum flow from $A\subseteq V(G)$ to $B\subseteq V(G)$ in $G$ under capacity $\widetilde{x}$. For any pair of disjoint subsets $A,B\subseteq V(G)$, the value of the maximum flow between $A$ and $B$ in $\+T\in\calD$ is at least that of the maximum flow between $A$ and $B$ in $G$, i.e., $\flow_G^{\widetilde{x}}(A,B) \leq \flow_{\+T}^y(A,B).$ Since each edge has congestion at most $\beta$ in expectation, any flow that can be routed on the tree distribution can be routed in the original graph with a loss of a factor of $\beta$, i.e.,  $\flow_G^{\widetilde{x}}(A,B) \geq \frac{1}{\beta}\cdot\mathbb{E}_{\+T\sim\calD}[\flow_{\+T}^y(A,B)].$

It was also known that the height of the tree in R\"acke's tree distribution can be bounded by the largest to smallest ratio of the capacity, and the number of trees in the support of the Racke's tree distribution is at most  on $O(\poly(n))$.

\begin{lemma} [\cite{Rac2008}]
\label{lem:racke-height}
All the trees $\+T$ in the support of the R\"acke's tree distribution have height $O(\log(nC))$, where C is the ratio of the largest to smallest capacity in $\widetilde{x}$.
\end{lemma}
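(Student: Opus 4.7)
The plan is to bound the depth of any tree $\+T$ in the support of $\calD$ by tracing through the hierarchical decomposition that underlies R\"acke's construction. At a high level, every internal node of $\+T$ corresponds to a subset $X \subseteq V(G)$, the root corresponds to $V(G)$, each leaf corresponds to a singleton, and the tree edge joining a parent $X$ to a child $X' \subseteq X$ represents the cut $\delta_G(X')$ and carries capacity $y(f) = \sum_{e \in \delta_G(X')} \widetilde{x}_e$. Bounding the height of $\+T$ is therefore equivalent to bounding the number of recursive splits needed before every cluster has been reduced to a singleton.

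First, I would attach a potential $\Phi$ to each cluster: a natural choice is the total capacity of edges inside the induced subgraph, $\Phi(X) = \sum_{e \in E(G[X])} \widetilde{x}_e$. This is at most $|E(G)| \cdot c_{\max} = O(n^2 c_{\max})$ at the root, while at any cluster that still contains at least one edge we have $\Phi(X) \ge c_{\min}$. Second, I would argue that each recursive split produces children whose potential is at most a constant factor smaller than that of the parent, i.e., $\Phi(X') \le (1-\alpha)\,\Phi(X)$ for some universal constant $\alpha > 0$. This is the substantive step: it is where the specific splitting rule of R\"acke's construction enters, via (approximately) balanced sparse cuts that simultaneously ensure that the cut edges contribute modest capacity to $y(f)$ and that the induced partition is balanced in terms of internal capacity.

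Combining these two ingredients with a telescoping argument along any root-to-leaf path gives a depth of at most $O\!\left(\log(\Phi_{\text{root}}/c_{\min})\right) = O(\log(n^2 c_{\max}/c_{\min})) = O(\log n + \log C) = O(\log(nC))$, as claimed. The main obstacle in executing this plan is precisely the constant-factor potential decrease, which cannot be deduced from the high-level mapping-based definition $(\+T, \+M, y)$ alone; one has to open up the cut-matching (or approximate balanced sparsest cut) subroutine underlying R\"acke's decomposition and verify that the chosen partition is simultaneously sparse and balanced. Once this invariant is in place, the rest of the argument is a routine geometric calculation, so the whole proof reduces to importing the balanced-cut guarantee from R\"acke's construction and performing the telescoping.
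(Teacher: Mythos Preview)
The paper does not prove this lemma at all; it is stated with a citation to \cite{Rac2008} and used as a black box. So there is no ``paper's own proof'' to compare against, and your proposal is attempting something the authors deliberately outsourced.

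That said, your sketch misidentifies the mechanism behind the height bound in R\"acke's 2008 construction. You describe a recursive balanced-sparsest-cut or cut-matching decomposition and propose the potential $\Phi(X)=\sum_{e\in E(G[X])}\widetilde{x}_e$; this is closer in spirit to R\"acke's earlier 2002 hierarchical decomposition. The 2008 paper instead proceeds via the equivalence with \emph{distance}-based probabilistic tree embeddings: one assigns edge lengths $\ell_e$ (arising from a multiplicative-weights routine over oblivious routings), and then applies the FRT hierarchical decomposition to the resulting shortest-path metric. In FRT, the cluster diameter halves at each level, so the height is $O(\log\Delta)$ where $\Delta$ is the aspect ratio of the metric. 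The lengths $\ell_e$ produced by R\"acke's procedure have aspect ratio bounded by $\poly(n)\cdot C$, which gives the $O(\log(nC))$ height.

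Consequently, the step you flag as the ``main obstacle'' --- a constant-factor drop in total internal capacity at each split --- is not something you will find in the construction you cite, and in fact there is no reason your potential $\Phi$ should decrease geometrically under the FRT partition (a cluster can retain almost all of its internal capacity while still having its metric diameter halved). If you want to supply an actual proof rather than a citation, the telescoping should be done on cluster \emph{diameter} in the induced metric, not on internal capacity, and the connection to $C$ comes through the bound on the length aspect ratio, not through a balanced-cut invariant.
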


\section{Algorithm}
\label{sec:approx-algo}

Now we describe the approximation algorithm for the connectivity augmentation problem.
First, we solve the augmentation LP by setting $k=\ell+1$ and $c_e=0$ for all $e\in E_\ell$ in \eqref{eqn:lp}.
Then, we obtain an LP solution $\{x_e\}_{e\in E(G)}$.
We assume $x_e=1$ for all edges $e\in E_\ell$.
Let $\beta$ be the congestion parameter of the tree-embedding. We then define two subsets of edges based on $\beta$ and $\ell$:
\begin{align*} 
    \LargeVal &=\left\{e\in E(G)\cmid x_e \geq \frac{1}{4\ell\beta}\right\},\\
    \SmallVal &=\left\{e\in E(G)\cmid x_e < \frac{1}{4\ell\beta}\right\}.
\end{align*}

Our algorithm buys edges in two different rounds. In the first step, we directly buy all edges in $\LargeVal$ to the solution subgraph $H_{\ell+1}$.
This incurs a factor $O(\ell\beta) = O(k\log n)$ in the approximation ratio.



After that, if $H_{\ell+1}$ is already $(\ell+1)$-connected, i.e., there are $\ell+1$ edge-disjoint paths from every $S_i$ to $T_i$, then we are done. If not, we continue to run a dependent rounding algorithm on the tree-embedding. We build the R\"acke's tree (a distribution of trees) with the capacity defined as follows. We will cap the capacity of the edges in $\LargeVal$ to be exactly $\frac{1}{4\ell\beta}$, while keeping the same value for edges $e$ with $x_e<\frac{1}{4\ell\beta}$. We also omit the fractional solutions that are smaller than $\frac{1}{2n^2}\cdot\frac{1}{4\ell\beta}$, to control the height of the trees.
Formally, we set the capacity of each edge $e\in E(G)$ as
\begin{equation}
\widetilde{x}_{e} = \left\{\begin{array}{ll}
  \frac{1}{4\ell\beta} &
  \text{if $e$ is in $\LargeVal$,}\\
  0                      & x_e < \frac{1}{2n^2}\cdot\frac{1}{4\ell\beta},\\
  x_e                      & \text{otherwise}.

\end{array}\right.
\label{eqn:capacity}
\end{equation}

By the setup of $\widetilde{x}$, the height of tree is bounded by $O(\log n)$.
\begin{lemma}
\label{lem:tree-height}
Let $\+D$ be the R\"acke's tree distribution for $(G,\widetilde{x})$, the height of each $\+T\sim \+D$ is at most $O(\log n)$.
\end{lemma}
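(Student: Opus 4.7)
The plan is to apply Lemma~\ref{lem:racke-height} directly. That lemma bounds the height of every tree in the support of the R\"acke distribution by $O(\log(nC))$, where $C$ is the ratio of the largest to smallest positive entry of the capacity vector passed to the embedding. So the entire task reduces to showing $C = \poly(n)$ for the $\widetilde{x}$ defined in \eqref{eqn:capacity}.

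For the upper bound on $\max_e \widetilde{x}_e$, I would split into two cases. For $e \in \LargeVal$, the first branch of \eqref{eqn:capacity} caps $\widetilde{x}_e$ at exactly $1/(4\ell\beta)$. For $e \notin \LargeVal$, the definition of $\LargeVal$ gives $x_e < 1/(4\ell\beta)$, so $\widetilde{x}_e \in \{x_e, 0\}$ is also at most $1/(4\ell\beta)$. For the lower bound on the minimum strictly positive entry, the second branch of \eqref{eqn:capacity} zeros out every edge with $x_e < (1/(2n^2)) \cdot (1/(4\ell\beta))$, so every edge surviving on the support of $\widetilde{x}$ satisfies $\widetilde{x}_e \ge (1/(2n^2)) \cdot (1/(4\ell\beta))$. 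Taking the ratio yields $C \le 2n^2$, and Lemma~\ref{lem:racke-height} then gives height at most $O(\log(n \cdot 2n^2)) = O(\log n)$.

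There is no substantial obstacle here; the two thresholds built into \eqref{eqn:capacity} were engineered precisely so that $C$ remains polynomial in $n$, and R\"acke's height bound absorbs the constant factors inside the logarithm. The only conceptual subtlety worth spelling out is that dropping zero-capacity edges before invoking R\"acke's construction is legitimate, since such edges contribute nothing to any capacitated flow or cut and are indistinguishable from absent edges; I would include this as a one-line remark rather than a formal verification.
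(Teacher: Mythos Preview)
Your proposal is correct and follows exactly the same approach as the paper: apply \Cref{lem:racke-height} after observing that the thresholds in \eqref{eqn:capacity} force $C\le 2n^2$. The paper's proof is a single sentence stating precisely this bound; you have simply spelled out the case analysis behind it.
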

\begin{proof}
It directly follows from \Cref{lem:racke-height} since the largest to smallest ratio w.r.t. $\widetilde{x}$ is $C\leq 2n^2$.
\end{proof}

Now we are ready to state our algorithm.
See \Cref{alg:main}.
The algorithm first computes an LP solution and set up a basic solution graph $H_{\ell+1}=(V,E_{\ell+1})$ by setting $E_{\ell + 1}=\LargeVal$.
Then it tries to add edges to $E_{\ell + 1}$ to make $H_{\ell+1}$ $(\ell+1)$-connected in $\tau$ rounds.
In each round, it samples a tree from R\"acke's tree distribution.
Then it samples certain edges in the tree by the rounding algorithm from Grandoni-Chalermsook-Laekhanukit \cite{ChalermsookGL15} and appends the corresponding edges in the graph $G$ of the sampled edges on the tree to $E_{\ell + 1}$.

We will present the main flow of the proof in \Cref{sec:analysis}. Also, for completeness, we will formally present the subroutine of tree rounding and its analysis in \Cref{sec:roundtree}.

\begin{algorithm}[H]
\caption{Algorithm for the Augmentation Problem.}
\label{alg:main}
\begin{algorithmic}[1]
    \Require  An $\ell$-connected subgraph $H_\ell=(V,E_\ell)$ ($\ell$-connected for each demand-pair).
    \State Define $c_e=0$ for those $e\in E_\ell$.
    \State Solve $x$ to be the fractional solution to the LP in \eqref{eqn:lp} with $k=\ell+1$. We assume all $x_e=1$ when $c_e=0$.
    \State Buy $\LargeVal$ edge set where $x_e \geq \frac{1}{4 \ell \beta}$ to $H_{\ell+1}$.
    \State Set the capacity $\widetilde{x}$ of edges following Eqn~\eqref{eqn:capacity}.
    \State Compute R\"{a}cke's tree distribution $\calD$ for $(G,\widetilde{x})$.
    \For {$\tau_1$ rounds}
        \State Sample a tree $(\+T,\+M,y)$ from $\calD$.
        \For {$\tau_2$ rounds}
            \State Append TreeRounding $(G,\+T)$ to $H_{\ell+1}$.
        \EndFor
    \EndFor
    \State return $H_{\ell+1}$.
\end{algorithmic}
\end{algorithm}

\section{Analysis: Component-level Paths and $(\ell+1)$-Connectivity}
\label{sec:analysis}
\paragraph*{CGL's Tree Rounding as a Black Box.}
Let us discuss the CGL's tree rounding subroutine in detail.
The subroutine keeps sampling trees from the distribution.
In each iteration, it randomly buys some edges in the tree with expected cost of $O(\text{polylog}(n)) \cdot \sum_{e\in E} c(e)x_e$.
As a result, each path in the tree that carries one unit of flow (or at least some fixed constant) has a constant probability to be selected.
Then if we have $\psi$ number of demanding pairs with constant flow to connect.
We need to suffer $O(\log(\psi))$ rounds to connect all of them in constant probability, which concludes in the $O(\text{polylog}(n) \cdot \log(\psi))$ approximation ratio.
To be more precise, we will need the following lemma in our proof. For completeness, we leave the proof of the lemma in \Cref{sec:roundtree}.

\begin{restatable}[CGL's Tree Rounding]{lemma}{treerounding}
\label{lem:cgl}
Suppose that there is a tree $\+T$ with $\height(\+T)=O(\log n)$ and an edge set $E'\subseteq E(\+T)$ that supports a flow of value at least $f$ between two vertex sets $A$ and $B$. If only edges from $E'$ can be selected by $\text{TreeRounding}(G,\+T)$, then it connects $A$ to $B$ with constant probability $\phi$ with cost of $O(\frac{1}{f} \cdot \beta \cdot \log^3 n) \cdot \sum_{e\in E} c_e x_e$.
\end{restatable}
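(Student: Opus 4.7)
The TreeRounding subroutine implements a Garg--Konjevod--Ravi (GKR) style dependent randomized rounding on $\+T$, restricted to edges in $E'$. The plan is to root $\+T$ arbitrarily and, proceeding top-down, independently mark each tree edge $e$ conditional on its parent edge $p(e)$ being marked, with conditional probability $\min\bigl\{1,\,y(e)/y(p(e))\bigr\}$; any edge outside $E'$ is never marked. To normalize for the flow value $f$, I would treat $y/f$ as the fractional tree solution to round, so that the marginal marking probability of each $e \in E'$ becomes $\Theta(y(e)/f)$. The output set of $G$-edges is $\bigcup_{e \text{ marked}} \+M(e)$.

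For the expected cost of a single GKR round, linearity of expectation together with the definition of $\load$ gives
\[
\Exp[\text{cost per round}] \;\leq\; \frac{O(1)}{f}\sum_{e' \in E'} y(e')\cdot c(\+M(e')) \;=\; \frac{O(1)}{f}\sum_{e \in E(G)} c_e \cdot \load(e) \;\leq\; \frac{O(\beta)}{f}\sum_{e \in E(G)} c_e x_e,
\]
where the last inequality takes expectation over $\+T\sim\+D$ (so $\Exp_{\+T}[\load(e)] \leq \beta\,\widetilde{x}_e$) and uses $\widetilde{x}_e \leq x_e$ by construction of $\widetilde{x}$.

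To lower-bound the connection probability in a single round, decompose the value-$f$ flow on $E'$ into tree paths $P_1,\ldots,P_m$ of flow values $f_1,\ldots,f_m$ with $\sum_i f_i \geq f$; each $P_i$ rises from some $a_i\in A$ to the lowest common ancestor of $a_i$ and $b_i$ and then descends to $b_i\in B$. Since $\height(\+T)=O(\log n)$, each $P_i$ has length $O(\log n)$, and the standard GKR telescoping of conditional marking probabilities gives $\Prob[P_i \text{ fully marked}] = \Omega\bigl(f_i/(f\log n)\bigr)$. Summing and applying a Paley--Zygmund / second-moment argument that bounds pairwise correlations by accounting for the capacity of the common ancestor chain shared by $P_i$ and $P_j$ yields a single-round success probability $\Prob[A \leftrightarrow B] = \Omega(1/\log^2 n)$.

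To amplify the success probability to the claimed constant $\phi$, the procedure repeats the GKR marking $O(\log^2 n)$ times independently, paying that many extra copies of the per-round expected cost, and I would absorb a further $O(\log n)$ slack from honestly handling the $\min\{1,\cdot\}$ capping in the top-down rule (edges with $y(e)\geq y(p(e))$ would otherwise be over-counted in the marginal). Multiplying the factors yields the claimed $\Exp[\text{cost}] = O\bigl(\beta\log^3 n/f\bigr)\sum_{e} c_e x_e$. The main technical hurdle is the second-moment step: two flow paths $P_i$ and $P_j$ can share an arbitrarily long common ancestor chain in $\+T$, so $\Prob[P_i \text{ and }P_j \text{ both marked}]$ can be much larger than $\Prob[P_i]\Prob[P_j]$, and the variance calculation must carefully weight this correlation by $f_i$, $f_j$, and the capacities along the shared chain, exactly as in the original GKR analysis for group Steiner tree.
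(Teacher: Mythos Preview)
Your overall plan—GKR-style top-down dependent rounding, a second-moment bound on the hitting probability, and amplification by repetition—is the same scheme the paper uses (it is CGL's tree rounding). However, one step in your sketch has a genuine gap and another has an incorrect justification.

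The gap is in the claim $\Pr[P_i \text{ fully marked}] = \Omega(f_i/(f\log n))$. A flow path $P_i$ runs $a_i \to \mathrm{LCA}(a_i,b_i) \to b_i$, but top-down GKR from the global root $r$ can mark the edges of $P_i$ only after the entire $r$-to-$\mathrm{LCA}(a_i,b_i)$ prefix is marked. That prefix is \emph{not} on $P_i$ and may contain an edge of capacity $\eps \ll f_i$ (R\"acke capacities are cut values and need not be monotone along root-to-leaf paths), in which case $\Pr[P_i \text{ marked}] \le \eps/f$, which can be made arbitrarily small; restricting the marking to $E'$ only worsens this, since the prefix need not lie in $E'$ at all and then nothing below it is ever marked. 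So the bare ``decompose the $A$--$B$ flow into tree paths and telescope each one'' argument does not go through as written. The paper's version handles the two-sided group structure differently (rooting so that the prefix issue does not arise and then running the standard group-hitting analysis), and you should follow that route rather than a per-flow-path lower bound.

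Separately, your ``extra $O(\log n)$ from honestly handling the $\min\{1,\cdot\}$ capping'' is not where a logarithm is lost. Since $\min\{1,r\}\le r$ term by term, an easy induction shows that the marginal marking probability of every edge $e$ is at most $y(e)/f$ regardless of whether capacities are monotone, so capping never inflates the expected per-round cost beyond $(1/f)\sum_{e} c_e\,\load(e)$. The $\log^3 n$ in the lemma is correct, but it does not come from this source.
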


In this part, we illustrate the big picture of our analysis. To show our main idea clearly, we assume that there is no edge with $x_e<\frac{1}{2n^2}\cdot \frac{1}{4\ell\beta}$. Although we need to scale down those tiny edges to control the height of R\"acke's tree, it only incurs a constant factor loss of the approximation ratio, and we will consider it in the complete proof in the following subsections.

The goal of our analysis is to formulate a sufficient condition for the $\ell+1$ connectivity, which consists of a small number of demanding pairs $(A,B)$ so that we can connect all of them with not too many iterations. In \Cref{alg:main}, after buying $\LargeVal$ edges, we have a subgraph $H_{\ell+1}$ that is not yet $(\ell+1)$-connected. We use $H$ to mean the state of $H_{\ell+1}$ at that moment, which contains all $\LargeVal$ edges. We will use $H_{\ell+1}$ to mean the final subgraph we return. Because $H$ is only $\ell$-connected, there are some edge-sets $F\subseteq E(H)$ of $\ell$ edges such that $E(H) \setminus F$ has no path from $S_i$ to $T_i$ for some $i$. To show the $(\ell+1)$-connectivity, we need to prove that, for any  edge-set $F$ of size $\ell$, there is a path in $H_{\ell+1} \setminus F$ for those unconnected pairs $(S_i,T_i)$ in $H \setminus F$.


A tree $\+T$ is \emph{good} for the cut $F$ if the load of $F$ on $\+T$ is at most $1/2$, i.e., $y(\map^{-1}(F)) \leq 1/2$. We prove that there is at least a tree in the support of the tree distribution that is \emph{good} for $F$. For a fixed $F$, we only focus on a \emph{good} tree because $F$ can at most block a flow of $1/2$ in $\+T$ and we does not lose too much connectivity in $\+T$ if we remove $F$.

\begin{lemma}
\label{lem:good-tree-probability}
For each edge-set $F\subseteq E(H)$ of exactly $\ell$ edges, a tree sampled in $\+T \sim \+D$ is good for $F$ with probability at least $1/2$.

\end{lemma}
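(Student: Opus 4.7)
The plan is to apply Markov's inequality to the nonnegative random variable $L(F) := y(\+M^{-1}(F))$, after establishing that $\Exp_{\+T \sim \+D}[L(F)] \leq 1/4$ using the expected-congestion guarantee of the R\"{a}cke embedding combined with the capping step in \eqref{eqn:capacity}.

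To bound the expectation, I would start from the union $\+M^{-1}(F) = \bigcup_{e \in F} \+M^{-1}(e)$, which gives
$$L(F) \;=\; \sum_{f \in \+M^{-1}(F)} y(f) \;\leq\; \sum_{e \in F}\sum_{f \in \+M^{-1}(e)} y(f) \;=\; \sum_{e \in F} \load(e),$$
with equality unless some tree edge lies in the preimage of several edges of $F$ (overcounting only helps). By \Cref{thm:racke08-embedding}, $\Exp[\rload(e)] \leq \beta$ for every $e\in E(G)$, so $\Exp[\load(e)] = \widetilde{x}_e \cdot \Exp[\rload(e)] \leq \beta\, \widetilde{x}_e$. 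The critical observation is that the capping in \eqref{eqn:capacity}, together with the very definitions of $\LargeVal$ and $\SmallVal$, forces $\widetilde{x}_e \leq \tfrac{1}{4\ell\beta}$ for every $e \in E(G)$; under the simplifying assumption made at the top of this section, no edge is zeroed out in $\widetilde{x}$. Thus $\Exp[\load(e)] \leq \tfrac{1}{4\ell}$, and summing over the $\ell$ edges of $F$ by linearity yields $\Exp[L(F)] \leq \ell \cdot \tfrac{1}{4\ell} = \tfrac{1}{4}$.

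Markov's inequality then gives $\Prob[L(F) > 1/2] \leq (1/4)/(1/2) = 1/2$, i.e., the tree is good for $F$ with probability at least $1/2$. I do not foresee a substantive obstacle: the whole statement is engineered by the choice of threshold $\tfrac{1}{4\ell\beta}$, selected precisely so that $|F| \cdot \beta \cdot \widetilde{x}_{\max} \leq 1/4$ and Markov clears the $1/2$ bar with room to spare. The only care points for the eventual full write-up are (i) stating the inequality $L(F) \leq \sum_{e \in F} \load(e)$ explicitly so that the union is handled correctly, and (ii) when the simplifying assumption on tiny edges is lifted, noting that edges zeroed out in $\widetilde{x}$ can be ignored in the load bookkeeping because they contribute no capacity to any preimage; this keeps the expectation bound $\leq 1/4$ intact in the general setting.
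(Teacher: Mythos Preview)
Your proposal is correct and follows essentially the same argument as the paper: bound $\Exp[y(\+M^{-1}(F))]\le\sum_{e\in F}\beta\,\widetilde{x}_e\le \ell\cdot\beta\cdot\tfrac{1}{4\ell\beta}=1/4$ and apply Markov. The only cosmetic difference is that the paper invokes $F\subseteq E(H)\subseteq\LargeVal$ to get $\widetilde{x}_e=\tfrac{1}{4\ell\beta}$ exactly, whereas you use the slightly more general cap $\widetilde{x}_e\le\tfrac{1}{4\ell\beta}$; either way the bound and the conclusion are identical.
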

\begin{proof}
    Since all edges in $F\subseteq E(H_{\ell})$ are $\LargeVal$ and $|F|=\ell$, the total capacity of $F$ w.r.t $\widetilde{x}$ is $\ell \cdot \frac{1}{4\ell\beta} = \frac{1}{4\beta}$. Note that the expected congestion of every edge in $F$ is at most $\beta$ by definition, so $\sum_{e\in F} \load(e) \leq 1/4$. We have $\Exp[y(\+M^{\text{-}1}(F))]=\Exp[\sum_{e\in F} \load(e)]\leq 1/4$. By Markov's inequality, we have that 
    \begin{align*}
        \Prob \sqtp{\widebar{\+E}} &=\Prob\sqtp{y(\+M^{\text{-}1}(F))\geq 1/2} \\
                             &\leq \Prob \sqtp{y(\+M^{\text{-}1}(F))\geq 2\Exp[y(\+M^{\text{-}1}(F))]}
                             \leq 1/2.\qedhere
    \end{align*}
    
\end{proof}

The na\"ive plan is to prove that the TreeRounding subroutine can return a new path avoiding $F$ between every $(S_i,T_i)$ with constant probability directly by \Cref{lem:cgl}. Hence, we need to show that the flow in $\+T \setminus \+M^{\text{-}1}(F)$ between $S_i$ and $\+T_i$ is at least a constant. Then we are done with an approximation ratio of $O((\ell \cdot \log n  + \log q)\cdot \log^4 n)$ because we only have $O(qn^{2\ell})$ choices of $(F,i)$ pair.
The flow between $S_i$ and $T_i$ in $\+T \setminus \+M^{\text{-}1}(F)$ is at least its flow in $\+T$ minus the value of $\+M^{\text{-}1}(F)$. If we focus on a good tree $\+T$, because $F$ only blocks a flow of value $1/2$, it suffices to prove the total flow from $S_i$ to $T_i$ in $\+T$ is larger than $1/2$. However, it is not guaranteed because the connectivity from $S_i$ to $T_i$ may totally lose if we remove $F$. Let us check the following example.

\begin{itemize}
    \item (Refer to \Cref{fig:planA}) The corresponding path $\map(e)$ in $G$ of the red edge $e\in \+T$ is part of a path from $S_i$ to $T_i$ in $G$.
    If there is some edge $f\in F$ such that $f\in \+M(e)$ then the flow from $S_i$ to $T_i$ in $T\setminus \map^{-1}(F)$ becomes zero (disconnected).
    Even if $\+T$ is a \emph{good} tree, this situation can happen.
    By the feasibility of $x$ we have $x(\delta_G(S_i\bigcup\{v_1\}))\geq \ell+1$.
    However, if $\delta_G(S_i\bigcup\{v_1\}) \subseteq H_\ell$, $\widetilde{x}(\delta_G(S_i\bigcup\{v_1\}))$ can be scaled down to $\frac{1}{4\beta}<1/2$.
    Since $F$ is allowed to block a flow of $1/2$ even in a good tree, it is possible to have  some edge $f\in F$ such that $f\in \+M(e)$.
\end{itemize}

\begin{figure}[H]
    \centering
    \subfigure[The red edge $e$ contains an edge in $F$.]{
        \label{fig:planA}
        \includegraphics[width=0.4\textwidth]{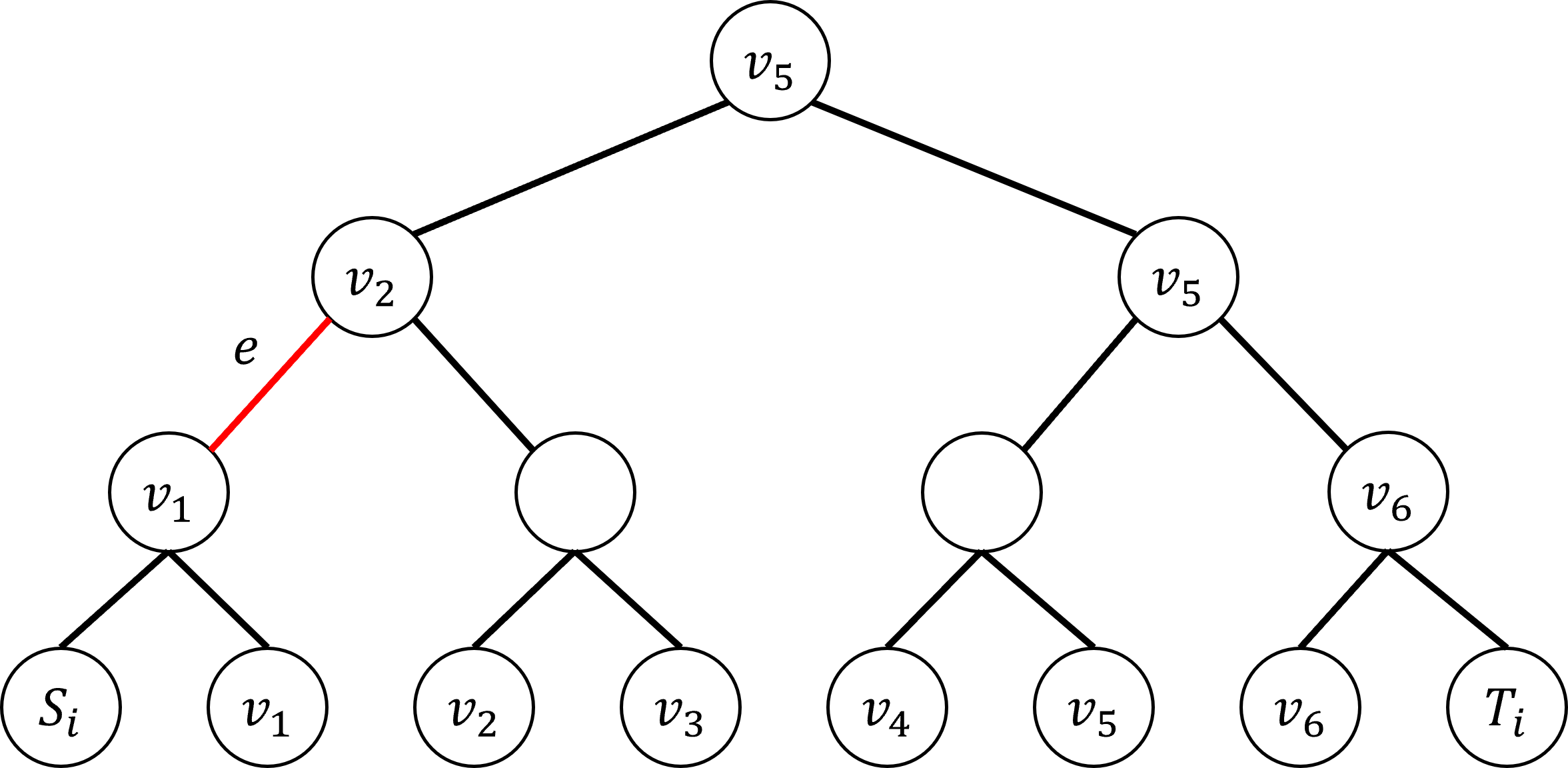}
    }
    \subfigure[A component-level path in blue. Red and green edges are crossing a component.]{
        \label{fig:planB}
        \includegraphics[width=0.4\textwidth]{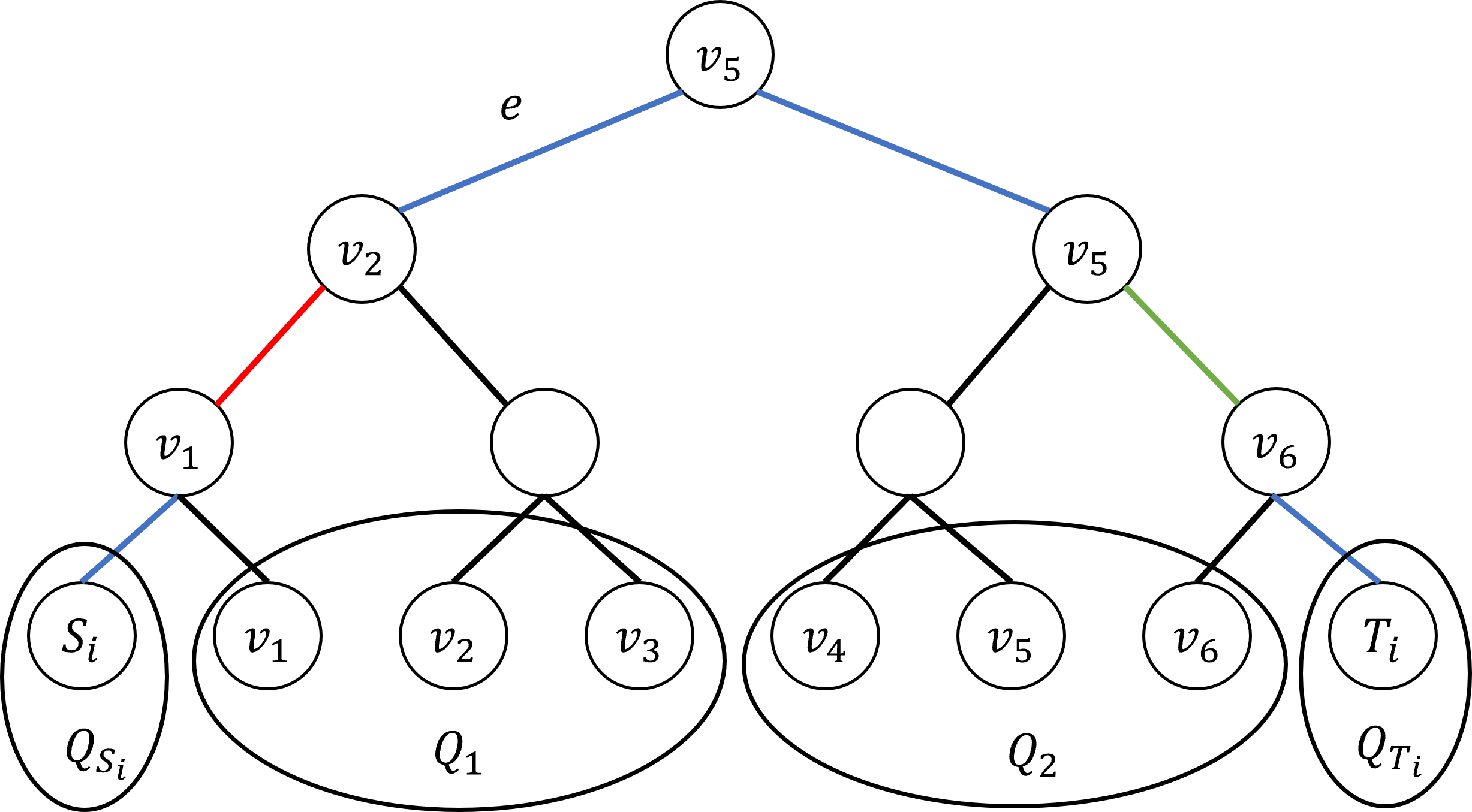}
    }
    \caption{A typical case for illustrating the analysis.}
    \label{fig:threeplan}
\end{figure}

To overcome the problem, we observe that the augmentation problem does not ask for a completely new path from $S_i$ to $T_i$. Let us move into a component-level viewpoint. The edge-set $F$ partitions the graph $H\setminus F$ into connected components. We know that every component $Q$ is already $(\ell+1)$-connected in $H$ so we do not need any new path inside $Q$. Let $Q_{S_i}$ the union of components that intersect with $S_i$, and similarly $Q_{T_i}$.  It means that a component-level path by edges in $G \setminus F$ that connects $Q_{S_i}$ and $Q_{T_i}$ suffices to show $(S_i,T_i)$ is $(\ell+1)$-connected. As a counterpart of the previous bad case, we present the reason why finding a component-level path is possible by the following example.

\begin{itemize}
    \item (Refer to \Cref{fig:planB})
    For the blue edge $e$ in \Cref{fig:planB}, we have $\widetilde{x_e}\geq 1$ because $\map(e)$ is a component-level path and it is in the unique path in the tree from $Q_{S_i}$ to $Q_{T_i}$.
    In this typical case, it is impossible to include any $f\in F$ in $M(e)$ in a good tree $\+T$ because otherwise it would block a unit-value (larger than $1/2$) flow.
\end{itemize}

The component-level path can be phrased in terms of the cut-based definition of tree-demand-pairs. Let $\mathbb{Q}$ be the set of connected components in $H \setminus F$. It is the same to say we need to connect every possible tree-demand-pairs defined as follows.
$$
Z = \{(A,B): (A,B) \text{ is a partition of $\mathbb{Q}$ such that $Q_{S_i}\in A$ and $Q_{T_i}\in B$} \}.
$$

However, this straightforward definition creates $2^{n}$ number of tree-demand-pairs, deriving an approximation ratio of $O(n)$. We discover an interesting technique to bound both the number and the flow of tree-demand-pairs.

\paragraph*{Connectivity by Shattered-Component-Level Paths.} We observe that the reason why we need the help of the component-level path is because of the existence of the red edge, which makes a component no longer connected on the tree if $F$ is removed. Referring to \Cref{fig:planB}, we call $Q_1$ a shattered component because of the red tree edge. On the contrary, assuming that the green tree edge does not include any edge in $F$, we call $Q_2$ an intact component. Because we can use the green edge to connect $Q_{S_i}$ and $Q_{T_i}$, we do not need to view $Q_2$ as an intermediate component in the component-level path. We consider a new component-level path that only contains shattered components, and we call it \emph{shattered-component-level paths}. Defining cut-based tree-demand-pairs on \emph{shattered-component-level paths} (Refer to \Cref{sec:generating-demand-pairs}.) significantly decreases the number of tree-demand-pairs while keeping the lower bound of flows. In conclusion, we prove the flowing properties in \Cref{sec:generating-demand-pairs}, \Cref{sec:algo:feasibility}, and \Cref{sec:algo:bound-demand-pairs}. We remark that \Cref{fig:threeplan} only illustrates a special case when $S_i$ and $T_i$ contain only one vertex. In general cases, we may have different paths between $S_i$ and $T_i$ on the tree. The edges between a vertex cut may not only be one single blue edge or green edge. We will formally discuss them later.

\begin{enumerate}
    \item (Refer to \Cref{sec:generating-demand-pairs}) If $(\+T,\+M,y)$ is good for $F$, we define the cut-based tree-demand-pairs by considering all possible shattered-component-level paths between all $(S_i,T_i)$.
    \item (Refer to \Cref{lem:dem-conn-F}) For each edge-set $F$, if we have sampled a good tree $\+T$ for $F$ and a subset of edges $E_b\subseteq E(\+T)\setminus \map^{-1}(F)$ such that all tree-demand-pairs is connected in $E_b$, then there is a path from $S_i$ to $T_i$ in $(E_\ell\bigcup \map(E_b)) \setminus F$ for each demand-pair $(S_i,T_i)$.
    \item (Refer to \Cref{lem:flow-available}) The flow in $\+T \setminus \+M^{\text{-}1}(F)$ between each tree-demand-pair is at least $\frac{1}{4\ell\beta}$.
    \item (Refer to \Cref{lem:num-demand}) Fix a tree $\+T$. The number of tree-demand-pairs defined by all $F$ and $(S_i,T_i)$ is at most $qn^{2\ell}\cdot2^{2\ell\beta}$. Note that it is $qn^{O(\ell)}$ if $\beta=O(\log n)$.
\end{enumerate}

Given the properties as above, we can show that our algorithm achieves the claimed approximation ratio.

\begin{lemma} [Connectivity]
\label{lem:alg_feasible}
Set $\tau_1= O(\ell \log n)$ and $\tau_2 = O(\log q + \ell (\log n + \beta))$. The output graph $H_{\ell+1}$ of \Cref{alg:main} is $(\ell+1)$-connected for each demand-pair $(S_i,T_i)$ with probability at least $1/2$.
\end{lemma}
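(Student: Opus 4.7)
The plan is to reduce $(\ell+1)$-connectivity to a small union bound over certificate cut-sets. First, I observe that any $(S_i,T_i)$-cut of size $\leq\ell$ in $H_{\ell+1}$ is forced to consist entirely of edges of $E(H)$: otherwise its restriction to $H$ would contradict the $\ell$-connectivity of $H$ on $(S_i,T_i)$. Hence it suffices to show that for every $F\subseteq E(H)$ with $|F|=\ell$---at most $n^{2\ell}$ choices---and every $i\in[q]$, the pair $(S_i,T_i)$ remains connected in $H_{\ell+1}\setminus F$.

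Next I would plug in the shattered-component-level machinery. Fix $(F,i)$. By Lemma \ref{lem:dem-conn-F}, it suffices that some outer iteration samples a tree $\+T$ that is good for $F$ and whose $\tau_2$ inner TreeRounding calls return a subset $E_b\subseteq E(\+T)\setminus \+M^{-1}(F)$ connecting every tree-demand-pair generated by $(F,i)$. Lemma \ref{lem:flow-available} supplies a flow of $f=1/(4\ell\beta)$ between any such tree-demand-pair in $\+T\setminus\+M^{-1}(F)$, and Lemma \ref{lem:tree-height} bounds $\height(\+T)=O(\log n)$; applying Lemma \ref{lem:cgl} with $E'=E(\+T)\setminus\+M^{-1}(F)$ then yields that a single TreeRounding call connects any given tree-demand-pair with some universal constant probability $\phi$.

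Finally, I would split the failure event in two. Let $E_1$ be the event that some $F$ never receives a good tree across the $\tau_1$ outer iterations; by Lemma \ref{lem:good-tree-probability} and the independence of iterations, $\Prob[E_1]\leq n^{2\ell}(1/2)^{\tau_1}\leq 1/4$ once $\tau_1=\Theta(\ell\log n)$. Let $E_2$ be the event that some outer iteration leaves some tree-demand-pair (over all $(F,i)$) unconnected by its $\tau_2$ TreeRounding calls; Lemma \ref{lem:num-demand} caps the number of tree-demand-pairs per tree by $qn^{2\ell}2^{2\ell\beta}$, giving $\Prob[E_2]\leq \tau_1\cdot qn^{2\ell}2^{2\ell\beta}(1-\phi)^{\tau_2}\leq 1/4$ once $\tau_2=\Theta(\log q+\ell\log n+\ell\beta)=O(\log q+\ell(\log n+\beta))$. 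When neither event occurs, every $(F,i)$ has a qualifying iteration, so Lemma \ref{lem:dem-conn-F} delivers the required $(S_i,T_i)$ connectivity in $H_{\ell+1}\setminus F$, and the total failure probability is at most $1/2$. The main conceptual obstacle sits upstream of this argument: one must confirm that the shattered-component-level tree-demand-pairs simultaneously preserve the $1/(4\ell\beta)$ flow, grow by only a $2^{O(\ell\beta)}$ factor per $(F,i)$, and truly certify connectivity of $H_{\ell+1}\setminus F$. Once those structural facts are in hand, this lemma reduces to the straightforward two-level union bound above, with no subtle probabilistic coupling remaining.
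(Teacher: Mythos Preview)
Your proposal is correct and follows essentially the same approach as the paper: both reduce to a two-level union bound over the $\le n^{2\ell}$ edge-sets $F\subseteq E(H)$ and the $\le qn^{2\ell}2^{2\ell\beta}$ tree-demand-pairs per sampled tree, invoking \Cref{lem:good-tree-probability}, \Cref{lem:num-demand}, \Cref{lem:flow-available}, \Cref{lem:cgl}, and \Cref{lem:dem-conn-F} in the same way. The only cosmetic differences are that you split the failure into two separate events $E_1,E_2$ (the paper instead bounds the joint per-iteration success by $3/4\cdot 1/2=3/8$ and union-bounds over $F$ directly), and you make explicit the reduction to $F\subseteq E(H)$, which the paper leaves implicit.
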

\begin{proof}

If we can prove that $H_{\ell+1}\setminus F$ still connects $S_i$ with $T_i$ for all $F$ with $|F|=\ell$, and for all $i=1,...,q$, then $H_{\ell+1}$ is $(\ell+1)$-connected.
Since every $\+T$ in the support of $D$ has $\height(\+T)=O(\log n)$ by \Cref{lem:tree-height}, combining with \cref{lem:cgl}, we have a constant probability $\phi$ to connect a tree-demand-pair on $\+T$. When we sample a tree $\+T$, by Property 4 (\Cref{lem:num-demand}), the total number of tree-demand-pairs in $\+T$ is $\psi\leq qn^{2\ell}\cdot2^{2\ell\beta}$. Set 
$$
\tau_2 = \frac{1}{\phi} \cdot \log (4qn^{2\ell}\cdot2^{2\ell\beta})= O(\log q +  \ell (\log n + \beta)).
$$ 
We can connect every tree-demand-pairs on $\+T$ (denote this event by $\+C_{\+T}$) with probability:  
$$ 
    \Pr[\+C_{\+T}] \geq 1 - \psi \cdot (1-\phi)^{\tau_2} \geq 3/4.
$$  
Conditioned on $\+C_{\+T}$ for certain $\+T$ that is good for $F$, it holds that $H_{\ell+1}\setminus F$ is connected for all $(S_i,T_i)$ by Property 2 (\Cref{lem:dem-conn-F}). 
Next, because for every $F$, by \Cref{lem:good-tree-probability}, we have $1/2$ probability such that $\+T$ is good for $F$. Set 
$$
\tau_1= \frac{8}{3} \log 2n^{2\ell} = O(\ell \log n).
$$ 
Let $\psi'\leq n^{2\ell}$ be the number of possible edge-cut $F$. The probability that we sample a good tree $\+T$ for all $F$ and $\+T$ is fully connected ($\+C_{\+T}$) is at least:
$$
1 - \psi' \cdot (1-3/4\cdot1/2)^{\tau_1} \geq 1/2.
$$
It means that $H_{\ell+1} \setminus F$ is connected for every $(S_i,T_i)$ and for all $F$ with probability at least $1/2$, which concludes the lemma.
\end{proof}

\begin{lemma} [Cost]
\label{lem:alg_cost}
The output graph $H_{\ell+1}$ of \Cref{alg:main} has cost 
\[
O(\ell^2\beta^2\log^4 n(\ell\log n + \ell\beta + \log q) )\sum_{e\in E} c_e x_e.
\]
\end{lemma}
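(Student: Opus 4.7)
The plan is to split the cost of $H_{\ell+1}$ into two disjoint contributions and bound each separately: (i) the edges bought outright in the $\LargeVal$-phase (line~3 of \Cref{alg:main}), and (ii) the edges appended over the $\tau_1\cdot\tau_2$ invocations of $\text{TreeRounding}(G,\+T)$. Only (ii) depends on the sampling and on the parameters $\tau_1,\tau_2$; part (i) is handled by a one-line LP-scaling argument.

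\paragraph{Bounding the $\LargeVal$ contribution.} By definition of $\LargeVal$, every $e\in\LargeVal$ satisfies $x_e\ge \tfrac{1}{4\ell\beta}$, so
\[
\sum_{e\in\LargeVal} c_e \;\le\; 4\ell\beta \sum_{e\in\LargeVal} c_e x_e \;\le\; 4\ell\beta \sum_{e\in E} c_e x_e.
\]
This is dominated by the bound we will obtain in (ii), so it contributes only a lower-order term to the final estimate.

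\paragraph{Bounding the TreeRounding contribution.} I would apply \Cref{lem:cgl} to each invocation. To do so I need two ingredients that are already in place in \Cref{sec:analysis}: every tree $\+T\sim\+D$ has $\height(\+T)=O(\log n)$ by \Cref{lem:tree-height}; and the flow that the rounding needs to route is at least $f = \tfrac{1}{4\ell\beta}$ by Property~3 (\Cref{lem:flow-available}) applied to any tree-demand-pair on a good $\+T$. Plugging $f=\tfrac{1}{4\ell\beta}$ into the cost guarantee of \Cref{lem:cgl}, one call of $\text{TreeRounding}(G,\+T)$ has expected cost
\[
O\!\left(\tfrac{1}{f}\cdot\beta\cdot\log^3 n\right)\sum_{e\in E} c_e x_e \;=\; O(\ell\beta^2\log^3 n)\sum_{e\in E} c_e x_e.
\]
Note that the cost bound of \Cref{lem:cgl} is the expected cost of a single call (independent of which pair we are analyzing), so linearity of expectation over the $\tau_1\cdot\tau_2$ calls and the choice $\tau_1 = O(\ell\log n)$, $\tau_2 = O(\log q + \ell\log n + \ell\beta)$ from \Cref{lem:alg_feasible} yields total expected cost
\[
\tau_1\cdot\tau_2\cdot O(\ell\beta^2\log^3 n)\sum_{e\in E} c_e x_e \;=\; O\!\bigl(\ell^2\beta^2\log^4 n\,(\ell\log n+\ell\beta+\log q)\bigr)\sum_{e\in E} c_e x_e.
\]
Adding the $O(\ell\beta)\sum c_e x_e$ from the $\LargeVal$-phase (which is absorbed) gives the claimed bound.

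\paragraph{Main obstacle.} There is no substantial combinatorial difficulty here: everything nontrivial has been discharged into \Cref{lem:cgl} (which provides the per-call cost in terms of the required flow), \Cref{lem:flow-available} (which guarantees $f\ge 1/(4\ell\beta)$), and \Cref{lem:tree-height} (which bounds the height). The only point worth being careful about is that the per-call cost of \Cref{lem:cgl} is stated with respect to the flow value $f$ of a single pair, so one must verify that the same value $f=1/(4\ell\beta)$ is a valid lower bound uniformly over every tree-demand-pair that the algorithm may need to cover; this is exactly the content of Property~3. Given these pieces, the proof reduces to the arithmetic above.
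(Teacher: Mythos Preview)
Your proposal is correct and follows essentially the same approach as the paper's proof: bound the $\LargeVal$ purchase by $O(\ell\beta)\sum_e c_e x_e$, invoke \Cref{lem:cgl} with the flow lower bound $f=1/(4\ell\beta)$ from \Cref{lem:flow-available} to get a per-call cost of $O(\ell\beta^2\log^3 n)\sum_e c_e x_e$, and multiply by $\tau_1\cdot\tau_2$. The paper's argument is terser but uses exactly these ingredients and the same arithmetic.
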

\begin{proof}
We use $4\ell \beta \cdot \sum_{e\in E} x_e = O(\ell \beta) \sum_{e\in E} x_e$ cost to buy the edges in $\LargeVal$. For the multiple rounds of TreeRounding, by \Cref{lem:cgl} and Property 3 (\Cref{lem:flow-available}), we know that for each round we pay
$$
O(\ell \beta^2 \log^3 n) \cdot \sum_{e\in E} c_e x_e.
$$
As $\tau_1=O(\ell\log n)$, $\tau_2=O(\log q + \ell (\log n + \beta))$, the total cost incurred by the algorithm is
\begin{align*}
   &O(\ell\log n\cdot(\log q + \ell(\log n+\beta)) \cdot \ell\beta^2 \cdot \log^3 n) \sum_{e\in E} c(e) x_e,
\end{align*}
as we claimed.
\end{proof}
\begin{corollary}
Combining \Cref{lem:alg_feasible} and \Cref{lem:alg_cost}, \Cref{alg:main} is a Monte Carlo $O(\ell^2\beta^2\log^4 n(\ell\log n + \ell\beta + \log q))$-approximation algorithm for the augmentation problem w.r.t. the LP-solution. It can be viewed as a Las Vegas one if we keep sampling until $H_{\ell+1}$ is $(\ell+1)$-connected.
\end{corollary}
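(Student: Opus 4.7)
The plan is to combine \Cref{lem:alg_feasible} and \Cref{lem:alg_cost} via a standard Markov plus union-bound argument, then upgrade the Monte Carlo guarantee to a Las Vegas one by polynomial-time verification of feasibility. The work is mostly bookkeeping since both constituent lemmas have already been established.

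First I would note that \Cref{lem:alg_feasible} gives a feasibility event $\+F := \{H_{\ell+1}\text{ is }(\ell+1)\text{-connected for every }(S_i,T_i)\}$ with $\Pr[\+F]\geq 1/2$. Meanwhile \Cref{lem:alg_cost} bounds the expected cost of $H_{\ell+1}$ by $M := O(\ell^2\beta^2\log^4 n(\ell\log n+\ell\beta+\log q))\cdot\sum_{e\in E}c_e x_e$; the expectation is over both the tree samples drawn from $\calD$ and the internal randomness of TreeRounding, and the bound follows by linearity of expectation over the $\tau_1\cdot\tau_2$ rounds, each contributing expected cost $O(\ell\beta^2\log^3 n)\sum_e c_e x_e$ by \Cref{lem:cgl}. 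Applying Markov's inequality to the nonnegative cost random variable gives $\Pr[\cost(H_{\ell+1})\leq 4M]\geq 3/4$, and a union bound with $\+F$ yields that with probability at least $1/2-1/4=1/4$ the algorithm simultaneously outputs a feasible $(\ell+1)$-connected subgraph of cost at most $4M=O(M)$. This is precisely the Monte Carlo approximation claim.

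To convert to Las Vegas, I would observe that feasibility is polynomial-time checkable: for each $i\in[q]$ we contract $S_i$ to a source and $T_i$ to a sink and compute a single max-flow, comparing the value to $\ell+1$. If any check fails, we discard $H_{\ell+1}$ and rerun \Cref{alg:main} from scratch (retaining only the base graph $H_\ell$). The trials are independent, each feasible with probability $\geq 1/2$, so the expected number of trials is at most $2$ and the expected total cost is at most $2\cdot 4M=O(M)$. Thus the same approximation ratio persists up to a constant factor in the Las Vegas model.

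The only mildly delicate point I would flag is that \Cref{lem:alg_cost} gives an expected cost bound, so the Markov step must be performed before the union bound with feasibility, rather than treating the cost bound as deterministic. Everything else is routine, and no new structural argument is required beyond the two lemmas already proven.
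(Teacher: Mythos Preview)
Your proposal is correct and is essentially the standard way to formalize what the paper leaves implicit; the paper gives no separate proof for this corollary, treating it as immediate from the two lemmas. One small remark: the paper phrases \Cref{lem:alg_cost} as a flat cost bound rather than an expectation, so your Markov step may be superfluous (though harmless), and your Las~Vegas bookkeeping ``$2\cdot 4M$'' is slightly tangled---cleaner is either to also reject runs with cost $>4M$ (success probability $\ge 1/4$ per trial, deterministic cost $\le 4M$) or to bound $\Exp[\cost\mid \text{feasible}]\le \Exp[\cost]/\Pr[\text{feasible}]\le 2M$---but the $O(M)$ conclusion stands either way.
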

Finally, by \Cref{lem:augmentation}, the corollary concludes an algorithm for the Group EC-SNDP with an approximation ratio of
$$
O(\beta^2 k^2\log k \log^4 n(k\log n + k\beta + \log q))
$$
as \Cref{thm:general-result} claims.
By using R\"acke's tree embedding where $\beta=O(\log n)$, we have the approximation ratio of
$$
O(k^2\log k \log^6 n(k\log n + \log q))
$$
as that in \Cref{thm:main}.

\subsection{Tree-Demand-Pairs for Shattered-Component-Level Paths}
\label{sec:generating-demand-pairs}

For a tree $\+T\in\+D$, let $\=Q^F$ be the set of all connected components in $H\setminus F$.
We remark that the subgraph $H$ might already consist of many connected components before removing the edges in $F$, for $V(H)=V(G)$ but $\abs{E(H)}$ might be much smaller than $\abs{E(G)}$.
For convenience, we think of each component $Q\in\=Q^F$ as a subset of $V$ instead of a subgraph of $H$.
For a nonempty set $S\subseteq V(G)$, suppose that $Q_1,Q_2,\ldots,Q_t$ are the components in $\=Q^F$ such that $Q_i\bigcap S\neq \emptyset$ for $1\le i\le t$, we define $Q_S=\bigcup_{i=1}^t Q_i$.

For a component $Q\in \=Q^F$, we say that $Q$ is \emph{shattered} if it is disconnected in $\+T\setminus \map^{-1}(F)$; otherwise, $Q$ is \emph{intact}.
For a cut $X\subseteq V(G)$, we say that $X$ is shattered if there exists some component $Q\in \=Q^F$ such that $X\bigcap Q\neq \emptyset$ and $(V(G)\setminus X)\bigcap Q\neq \emptyset$; otherwise, $X$ is intact.

Now we are ready to generate a collection $Z_F$ of tree-demand-pairs for each edge-set $F$ of $\ell$ edges and a good tree $\+T\in\calD$ for $F$. We generate all the possible partitions $(A',B')$ of shattered components in $\=Q^F$, denoted by $\mathbb{U}^F_{\+T}$.
Here we allow $A'=\emptyset$ or $B'=\emptyset$.
For each partition $(A',B')$, generate tree-demand-pairs $(A'\bigcup Q_{S_i}, B'\bigcup Q_{T_i})$ for each $i\in [q]$ such that $Q_{S_i}\bigcap Q_{T_i}=\emptyset$.
Formally, we have
\[
\textstyle Z_F := \set{(A'\bigcup Q_{S_i}, B'\bigcup Q_{T_i})\mid (A',B')\in \=U^F_{\+T}, i\in [q], Q_{S_i}\bigcap Q_{T_i}=\emptyset}.
\]
Notice that we only need $Z_F$ as tree-demand-pairs on tree $\+T$ when $\+T$ is \emph{good} for $F$, i.e., $F$ is \emph{good} on $\+T$. The set of all tree-demand-pairs for a tree $\+T$ is denoted by $Z_{\+T}=\bigcup_{F\subseteq E(H): \abs{F}=\ell, \text{$F$ is good on $\+T$}} Z_F$.



\subsection{Feasibility of the Reduction}
\label{sec:algo:feasibility}

We prove that the set of tree-demand-pairs of $Z_F$ on a tree $\+T$ corresponds to the problem of connecting all set pairs $(S_i,T_i)$ that are disconnected in $H\setminus F$.

\begin{lemma} \label{lem:dem-conn-F}
  For an edge-set $F\subseteq E(H)$ of $\ell$ edges, if there exist a tree $\+T$ and a set of edges $E_b\subseteq E(\+T)\setminus \map^{-1}(F)$ such that for every demand pair $(A,B)\in Z_F$ there is a path in $E_b$ connecting $A$ to $B$, then there is a $(S_i,T_i)$-path in $\tp{H\bigcup \map(E_b)}\setminus F$ for each $i\in [q]$. 
\end{lemma}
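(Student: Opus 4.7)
The plan is to fix an index $i\in[q]$ and argue by cases on whether $Q_{S_i}\cap Q_{T_i}$ is empty. As a preliminary step, I would note that the hypothesis $E_b\subseteq E(\+T)\setminus \map^{-1}(F)$ implies $\map(E_b)\cap F=\emptyset$, so that $(H\cup \map(E_b))\setminus F = (H\setminus F)\cup \map(E_b)$; it therefore suffices to exhibit an $(S_i,T_i)$-path in the latter graph.

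The case $Q_{S_i}\cap Q_{T_i}\neq\emptyset$ is immediate: one picks a component $Q\in\=Q^F$ intersecting both $S_i$ and $T_i$, and since $Q$ is connected in $H\setminus F$, any in-$Q$ path from $Q\cap S_i$ to $Q\cap T_i$ already lies in $H\setminus F$, which is enough.

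The substantive case is $Q_{S_i}\cap Q_{T_i}=\emptyset$, which I would treat by contradiction. Suppose there is an edge cut $(X,\bar X)$ with $S_i\subseteq X$, $T_i\subseteq \bar X := V(G)\setminus X$, and no edge of $(H\setminus F)\cup \map(E_b)$ crossing it. The key observation is that every $Q\in\=Q^F$ is connected in $H\setminus F$ and therefore lies entirely in $X$ or entirely in $\bar X$; in particular $Q_{S_i}\subseteq X$ and $Q_{T_i}\subseteq \bar X$. I would then define
\[
A' := \set{Q\in\=Q^F : Q\text{ is shattered and }Q\subseteq X}, \qquad B' := \set{Q\in\=Q^F : Q\text{ is shattered and }Q\subseteq \bar X},
\]
so that $(A',B')$ is a partition of the shattered components and $A'\cup Q_{S_i}\subseteq X$, $B'\cup Q_{T_i}\subseteq \bar X$. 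Since $Q_{S_i}\cap Q_{T_i}=\emptyset$, the pair $(A'\cup Q_{S_i},\, B'\cup Q_{T_i})$ lies in $Z_F$.

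By hypothesis, $E_b$ then contains a tree-path $P$ from some leaf in $\map^{-1}(A'\cup Q_{S_i})$ to some leaf in $\map^{-1}(B'\cup Q_{T_i})$. Concatenating the $G$-paths $\map(e')$ along consecutive edges $e'\in P$ yields a walk in $G$, from a vertex in $A'\cup Q_{S_i}\subseteq X$ to a vertex in $B'\cup Q_{T_i}\subseteq \bar X$, whose edges all lie in $\map(E_b)\subseteq (H\setminus F)\cup \map(E_b)$. Some edge of this walk must cross the cut $(X,\bar X)$, contradicting the standing assumption. The main conceptual hurdle will be recognizing that only shattered components need to be labeled in the partition defining $Z_F$: intact components are automatically forced to one side of any $S_i$-$T_i$ cut by their own connectedness in $H\setminus F$, which is exactly why $Z_F$ can be kept small while still certifying every relevant cut.
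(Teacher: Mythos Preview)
Your proposal is correct and follows essentially the same approach as the paper. The paper argues directly by casing on whether a cut $X$ (with $Q_{S_i}\subseteq X$, $Q_{T_i}\subseteq V\setminus X$) is shattered or intact, while you phrase the intact case as a contradiction hypothesis (no edge of $(H\setminus F)\cup\map(E_b)$ crosses the cut, which forces every component to lie on one side); the remaining argument---building the pair $(A'\cup Q_{S_i},B'\cup Q_{T_i})\in Z_F$ from the shattered components on each side and pushing the tree path through $\map$---is identical.
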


\begin{proof}
  Let $\overline{H}=H\bigcup \map(E_b)$ and $V=V(G)=V(H)$.
  We recall that, for a vertex set $S\subseteq V$, $Q_{S}$ is defined as the union of components in $\=Q^F$ intersecting with $S$, where $\=Q^F$ consists of all the connected components in $H\setminus F$ (a component is viewed as a subset of $V$ rather than a subgraph of $H$).
  If $Q_{S_i}\bigcap Q_{T_i}\neq \emptyset$, then we are done.
  Hence, we may assume that $Q_{S_i}\bigcap Q_{T_i}=\emptyset$.
  Then it suffices to show that, $\delta_{\overline{H}}(X)\ge 1$, for every cut $X\subseteq V$ satisfying $Q_{S_i}\subseteq X$ and $Q_{T_i}\subseteq V\setminus X$.

  If $X$ is a shattered cut, then there is some component $Q\in \=Q^F$ such that $Q\bigcap X\neq \emptyset$ and $Q\bigcap (V\setminus X)\neq \emptyset$.
  Thus, there is an edge $e=(u,v)\in E(H)\setminus F\subseteq E(\overline{H})$ such that $u\in Q\bigcap X$ and $v\in Q\bigcap (V\setminus X)$, implying that $\delta_{\overline H}(X)\ge 1$.

  If $X$ is an intact cut, then let $A=Q_{S_i}\bigcup \set{Q\in\=Q^F: Q\subseteq X}$ and $B=Q_{T_i}\bigcup \set{Q\in \=Q^F: Q\subseteq V\setminus X}$.
  Clearly, it holds that $(A,B)\in Z_F$.
  By the premise, there is a path $p$ in $E_b$ connecting $A\subseteq X$ to $B\subseteq V\setminus X$.
  Thus, it follows from the properties of R\"acke's tree that the corresponding path $\map(p)$ is in $H\bigcup \map(E_b)$.
  The premise $E_b\subseteq E(\+T)\setminus \map^{-1}(F)$ further ensures that $\map(p)$ is in $\tp{H\bigcup \map(E_b)}\setminus F$ because, otherwise, $p$ would use some edge in $\map^{-1}(F)$.
  Thus, $\delta_{\overline{H}}(X)\ge 1$, proving the lemma.
\end{proof}

\subsection{Bounding the Flow and the Number of Tree-Demand-Pairs}
\label{sec:algo:bound-demand-pairs}

As mentioned, we will later invoke a randomized algorithm to round the flows on all sampled trees $\+T$. However, given a specific tree $\+T$, it might not be feasible to connect the tree-demand-pairs for all the edge-sets.
This is because the maximum flow of some tree-demand-pairs may be too small.
To this end, let us recall the definition of ``good".

\begin{definition}
\label{def:good-tree}
We say that a tree $(\+T,M,y)$ is good for an edge-set $F\subseteq E(H)$, or $F$ is good on $\+T$, if $y(\map^{-1}(F)) \leq 1/2$.

\end{definition}

We recall that there are two types of cuts $X\subseteq V(H)$: intact and shattered.
We first prove that an intact cut has at least constant capacity with respect to $\widetilde{x}$ in the graph $G$.

\begin{lemma}
\label{lem:cap-intact-cut}
Let $X\subseteq V(G)$ be a cut such that $S_i\subseteq X$ and $T_i\subseteq V\setminus X$ for some set-pair $(S_i,T_i)$.
If $X$ is intact with respect to $F$, then $\widetilde{x}(\delta_{G}(X)\setminus F)\ge 3/4$.
\end{lemma}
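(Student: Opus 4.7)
}
The plan is to first translate the topological hypothesis ``$X$ is intact'' into an algebraic statement about which edges of $G$ can lie in $\delta_G(X)\setminus F$, and then to apply LP feasibility together with a small loss from the capacity-thresholding in \eqref{eqn:capacity}.

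First I would observe that $E(H)=\LargeVal$ (since $H$ is obtained from $H_\ell$ by buying all of $\LargeVal$, and $E_\ell\subseteq \LargeVal$ because we enforced $x_e=1$ on $E_\ell$). Next, I would show that $X$ being intact implies $\delta_H(X)\subseteq F$: if some edge $e=(u,v)\in E(H)\setminus F$ had $u\in X$ and $v\in V\setminus X$, then $u$ and $v$ would lie in the same connected component $Q\in \=Q^F$, yet $Q\bigcap X$ and $Q\bigcap (V\setminus X)$ would both be nonempty, contradicting intactness. Combined with $E(H)=\LargeVal$, this says that every edge of $\delta_G(X)\setminus F$ belongs to $\SmallVal$.

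Next, I would use LP feasibility. Since $S_i\subseteq X\subseteq V\setminus T_i$ and the LP is solved with $k=\ell+1$, we have $x(\delta_G(X))\geq \ell+1$. Using $|F|\leq \ell$ and $x_e\leq 1$, we get $x(F\bigcap \delta_G(X))\leq \ell$, hence
\[
x(\delta_G(X)\setminus F)\;\geq\; (\ell+1)-\ell \;=\; 1.
\]
Now I would account for the gap between $x$ and $\widetilde{x}$ on $\SmallVal$. By \eqref{eqn:capacity}, for $e\in\SmallVal$ we have either $\widetilde{x}_e=x_e$ or $\widetilde{x}_e=0$; in the latter case $x_e<\tfrac{1}{2n^2}\cdot\tfrac{1}{4\ell\beta}=\tfrac{1}{8n^2\ell\beta}$. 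Since every edge in $\delta_G(X)\setminus F$ lies in $\SmallVal$ and $|\delta_G(X)\setminus F|\leq |E(G)|\leq n^2$, the total rounding loss on this cut is
\[
x(\delta_G(X)\setminus F)-\widetilde{x}(\delta_G(X)\setminus F)\;\leq\; n^2\cdot \frac{1}{8n^2\ell\beta}\;=\;\frac{1}{8\ell\beta}\;\leq\; \frac{1}{8},
\]
assuming $\ell\beta\geq 1$, which holds in any nontrivial regime. Combining the previous two displays yields $\widetilde{x}(\delta_G(X)\setminus F)\geq 1-\tfrac{1}{8}=\tfrac{7}{8}\geq \tfrac{3}{4}$, as required.

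The argument is almost entirely bookkeeping once the structural observation ``intactness pushes all non-$F$ $H$-edges out of the cut'' is in hand, so I do not expect a real obstacle. The only point that needs care is making sure the threshold $1/(8n^2\ell\beta)$ chosen in \eqref{eqn:capacity} is exactly small enough to cover up to $n^2$ edges while still leaving a slack of at least $1/4$ against the LP-lower bound of $1$; the numbers in \eqref{eqn:capacity} are precisely tuned for this, and one should double-check that the slackness $\ell+1-\ell=1$ coming from LP feasibility (rather than any tighter bound) is indeed what is used, since it is this $+1$ gap that the whole reduction rests on.
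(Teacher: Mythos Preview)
Your proposal is correct and follows essentially the same argument as the paper: use intactness to force $\delta_G(X)\setminus F\subseteq \SmallVal$, apply LP feasibility plus $|F|\le \ell$ to get $x(\delta_G(X)\setminus F)\ge 1$, and then subtract the at most $n^2\cdot \tfrac{1}{8n^2\ell\beta}$ lost to zeroed-out tiny edges. Your write-up is in fact slightly more careful than the paper's (you explicitly justify $E(H)=\LargeVal$ and only subtract $x(F\cap\delta_G(X))$ rather than $x(F)$), but there is no substantive difference.
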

\begin{proof}
Since $X$ is intact, all edges in $\delta_G(X)\setminus{F}$ are not in $H$, meaning that $\delta_G(X)\setminus F\subseteq\SmallVal$.
Thus, each edge $e$ in $\delta_G(X)\setminus F$ has capacity $\widetilde{x}_e=x_e$.
It follows from the LP constraints on $x$ that $x(\delta_G(X))\geq \ell+1$.
Since $F$ consists of $\ell$ edges, the capacity $x(F)$ is at most $\ell$.
Thus,
$$
\ell+1 \leq  x(\delta_{G}(X))
        =    x(\delta_{G}(X)\setminus{F}) + x(F)
        \leq x(\delta_{G}(X)\setminus{F}) + \ell
$$
Therefore, $x(\delta_G(X)\setminus F) \geq 1$, implying that
\[
  \widetilde{x}(\delta_{G}(X)\setminus{F})
  = x(\delta_{G}(X)\setminus{F}) - \sum_{e\in \delta_G(X)\setminus F\cmid x_e < (1/2n^2)\cdot(1/4\ell\beta)} x_e
  \ge 1 - n^2 \cdot (1/2n^2)\cdot (1/4\ell\beta)
  \ge 3/4. \qedhere
\]
\end{proof}

Since the capacity of an edge-set $F$ in any good tree is at most $2\beta\cdot\widetilde{x}(F)$, the next lemma follows.

\begin{lemma}[Lower Bound on the Flow Value in the Tree]
\label{lem:flow-available}
Let $F\subseteq E(H)$ be any edge-set of $\ell$ edges such that $Q_{S_i}\bigcap Q_{T_i}=\emptyset$ for some $i\in[q]$ and let $(A,B)$ be any tree-demand-pair generated for $S_i$ and $T_i$.
Then for a good tree $\+T$ for $F$ the flow that can be routed from $A$ to $B$ in $\+T\setminus\map^{-1}(F)$ is at least $1/(4\ell\beta)$.
\end{lemma}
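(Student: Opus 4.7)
The plan is to show that every minimum $A$-$B$ cut $F^{*}\subseteq E(\+T')$ in the forest $\+T':=\+T\setminus\map^{-1}(F)$ has capacity $y(F^{*})\ge 1/(4\ell\beta)$; by max-flow--min-cut in the tree this is exactly the desired bound on $\flow_{\+T'}(A,B)$. The argument is a case analysis on the shape in $G$ of the cut that $F^{*}$ induces. Fix $F^{*}$ and consider the components of $\+T'-F^{*}=\+T-\bigl(F^{*}\cup\map^{-1}(F)\bigr)$; let $V_{1},\dots,V_{r}$ be their leaf sets in $V(G)$ and set $X:=\bigcup_{V_{j}\cap A\neq\emptyset}V_{j}$, so $A\subseteq X$ and $B\subseteq V\setminus X$. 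The basic charging tool is this: for any $e=(u_{e},v_{e})\in\delta_{G}(X)\setminus F$ the endpoints $u_{e},v_{e}$ lie in distinct $V_{j}$'s, so the unique $\+T$-path $P_{e}$ from $u_e$ to $v_e$ must cross $F^{*}\cup\map^{-1}(F)$; since R\"acke's routing saturates tree capacities by sending $\widetilde{x}_{e}$ units of flow along $P_{e}$, summing over such $e$ yields
\[
y(F^{*})+y(\map^{-1}(F))\;\ge\;\widetilde{x}\bigl(\delta_{G}(X)\setminus F\bigr).
\]

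When $X$ is intact, \Cref{lem:cap-intact-cut} supplies $\widetilde{x}(\delta_{G}(X)\setminus F)\ge 3/4$ and the good-tree hypothesis gives $y(\map^{-1}(F))\le 1/2$, so that $y(F^{*})\ge 1/4\ge 1/(4\ell\beta)$, and the case is done.

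The main obstacle is the shattered case, where the charging inequality alone only gives $y(F^{*})\ge 1/(4\ell\beta)-1/2$, which is vacuous. I would handle it by exploiting the defining structure of $Z_{F}$: every shattered component of $\=Q^{F}$ is placed entirely inside $A$ or entirely inside $B$, so any component $Q\in\=Q^{F}$ that is split by $(X,V\setminus X)$ must in fact be an intact one. Pick such an intact $Q$ together with $u\in Q\cap X$ and $v\in Q\setminus X$; since $Q$ is a connected component of $H\setminus F$, a $u$-$v$ walk inside $H\setminus F$ stays in $Q$ and must leave $X$ through some edge $e\in\LargeVal\setminus F$ both of whose endpoints lie in $Q$. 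The crucial observation is that because $Q$ is intact, all leaves of $Q$ lie in a single tree-component of $\+T'$, which forces the unique $\+T$-path $P_{e}$ to avoid $\map^{-1}(F)$ entirely; hence $P_{e}\subseteq E(\+T')$ and the generic crossing observation forces $P_{e}\cap F^{*}\neq\emptyset$. Charging the R\"acke flow $\widetilde{x}_{e}=1/(4\ell\beta)$ of this single edge directly against $F^{*}$ (with no $\map^{-1}(F)$ contribution to subtract off) yields $y(F^{*})\ge 1/(4\ell\beta)$, completing the argument.
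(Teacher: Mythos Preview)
Your proof is correct and follows essentially the same approach as the paper: take (the leaf partition induced by) a minimum $A$--$B$ cut, split into the intact and shattered cases, invoke \Cref{lem:cap-intact-cut} plus the good-tree bound in the intact case, and in the shattered case use that the split component $Q$ must be tree-intact to produce a crossing of $F^{*}$ with capacity at least $1/(4\ell\beta)$. The only cosmetic difference is that in the shattered case you charge a single $\LargeVal$ graph edge inside $Q$ against $F^{*}$, whereas the paper argues that every tree edge on the $\+T'$-path inside $Q$ has $y$-capacity at least $1/(4\ell\beta)$; these are two phrasings of the same observation.
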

\begin{proof}
  It is sufficient to prove that, for any $X\subseteq V$ such that $A\subseteq X$ and $B\subseteq V\setminus X$, the value of flow that can be routed in $\+T\setminus\map^{-1}(F)$ from $X$ to $V\setminus X$ is at least $1/(4\ell\beta)$.
  
  If $X$ is an intact cut, then $\widetilde{x}(\delta_G(X)\setminus F)\ge 3/4$ by \Cref{lem:cap-intact-cut} and $\flow^y_{\+T}(X,V\setminus X)\ge \flow_G^{\widetilde{x}}(X,V\setminus X)\ge 3/4$.
  Since $F$ is good on $\+T$, we have
  $
    y(\map^{-1}(F))\leq 1/2
  $
  which means that if the edges $\map^{-1}(F)$ are removed from $\+T$, then any flow in $\+T$ can decrease in value by at most $1/2$.
  Therefore, we have that
  $$\flow^y_{\+T\setminus \map^{-1}(F)}(X,V\setminus X)\ge \flow^y_{\+T}(X,V\setminus X)-1/2\ge 1/4\ge 1/(4\ell\beta).$$

  If $X$ is a shattered cut, then there is some component $Q\in \=Q^F$ such that $Q\bigcap X\neq \emptyset$ and $Q\bigcap (V\setminus X)\neq \emptyset$.
  The definition of tree-demand-pair $(A,B)$ implies that $Q$ cannot be a shattered component.
  Thus, $Q$ is still connected in $\+T\setminus \map^{-1}(F)$, and there is a path in $\+T\setminus \map^{-1}(F)$ connecting $X$ to $V\setminus X$.
  We claim that, for any edge $e\in E(\+T)$, if the partition $(Y,V\setminus Y)$ on leaves obtained by removing $e$ from $\+T$ satisfies $S_j\subseteq Y$ and $T_j\subseteq V\setminus Y$ for some $j\in [q]$, then $y_e\ge 1/(4\ell\beta)$.
  If so, then the value of the flow that can be routed from $X$ to $V\setminus X$ in $\+T\setminus \map^{-1}(F)$ is at least $1/(4\ell\beta)$.

  To prove the claim, note that $y_e=\widetilde{x}(\delta_G(Y))$.
  If there is some edge $e'$ in $\LargeVal$ crossing $Y$ and $V\setminus Y$, then $\widetilde{x}_{e'}=1/(4\ell\beta)$.
  Otherwise
  \begin{align*}
  \widetilde{x}(\delta_G(Y))
  &= \sum_{e\in\delta_G(Y)} x_e - \sum_{e\in\delta_G(Y)\cmid x_e< (1/2n^2)\cdot(1/4\ell\beta)} x_e \\
  &\ge (\ell+1) - n^2\cdot (1/2n^2)\cdot (1/4\ell\beta)  \\
  &\ge 1/(4\ell\beta).
  \qedhere
  \end{align*}
\end{proof}

\Cref{lem:flow-available} implies that for every edge-set $F\subseteq E(H)$ there is enough flow in a good tree to connect each tree-demand-pair $(A,B)\in Z_F$.
Next, we bound the number of tree-demand-pairs.

\begin{lemma}
\label{lem:num-shattered}
For any edge-set $F\subseteq E(H)$ of $\ell$ edges and a good tree $\+T$ for $F$, the number of shattered components in $\mathbb{Q}^F$ is at most $2\ell\beta$.
\end{lemma}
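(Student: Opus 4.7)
The plan is to set up a charging scheme that assigns to every shattered component $Q$ a dedicated piece of capacity in $y(\map^{-1}(F))$ of size at least $1/(4\ell\beta)$, and then to use the goodness hypothesis $y(\map^{-1}(F))\le 1/2$ to cap the number of shattered components at $2\ell\beta$.

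First I would exploit the structure of the shattering. For a shattered component $Q\in\=Q^F$, let $\+T_Q\subseteq \+T$ denote the minimal subtree spanning the leaves $\map^{-1}(Q)$. Because $Q$ becomes disconnected in $\+T\setminus \map^{-1}(F)$, the intersection $\map^{-1}(F)\cap E(\+T_Q)$ is nonempty; pick any witness edge $f_Q$ from it. Removing $f_Q$ from $\+T$ induces a leaf partition $(Y_Q,V\setminus Y_Q)$ with $Y_Q\cap Q\neq\emptyset$ and $(V\setminus Y_Q)\cap Q\neq\emptyset$ (this is exactly what it means for $f_Q$ to lie on $\+T_Q$). Since $Q$ is a connected component of $H\setminus F$, there must be a path inside $Q$, using only edges of $E(H)\setminus F$, from the $Y_Q$-side of $Q$ to the opposite side; such a path has to traverse the cut $\delta_G(Y_Q)$ at an edge $e_Q\in (E(H)\setminus F)$ whose two endpoints both lie in $Q$.

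Next I would argue that $e_Q$ contributes $\widetilde{x}_{e_Q}\ge 1/(4\ell\beta)$ to $y_{f_Q}=\widetilde{x}(\delta_G(Y_Q))$. This is immediate from the capacity recipe in \eqref{eqn:capacity}: the LP-assumption $x_e=1$ on $E_\ell$ puts $E_\ell\subseteq\LargeVal$, so every edge of $E(H)=E_\ell\cup\LargeVal$ has capacity exactly $1/(4\ell\beta)$ under $\widetilde{x}$. Crucially, if two distinct shattered components $Q\neq Q'$ happen to choose the same witness $f_Q=f_{Q'}=f^*$, then the cut $\delta_G(Y_{f^*})$ contains witness edges $e_Q$ and $e_{Q'}$ that are automatically distinct, because $e_Q$ lies entirely in $Q$ and $e_{Q'}$ lies entirely in $Q'$ and components are disjoint. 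Iterating, if $f^*\in\map^{-1}(F)$ is chosen by exactly $k_{f^*}$ shattered components, then $y_{f^*}\ge k_{f^*}/(4\ell\beta)$.

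Finally I would sum the charges. Letting $S$ denote the number of shattered components,
\[
\tfrac{1}{2}\ \ge\ y(\map^{-1}(F))\ =\ \sum_{f^*\in\map^{-1}(F)} y_{f^*}\ \ge\ \sum_{f^*\in\map^{-1}(F)}\frac{k_{f^*}}{4\ell\beta}\ =\ \frac{S}{4\ell\beta},
\]
which yields $S\le 2\ell\beta$ as claimed. The main subtlety to keep track of is the requirement that the witness edges $e_Q$ be \emph{internal} to $Q$ rather than merely crossing $\delta_G(Y_Q)$ from outside; this is what makes the $e_Q$ for different components disjoint and what prevents double-counting. Once that point is secured via the connectivity of $Q$ inside $H\setminus F$, the rest of the argument is a one-line arithmetic consequence of \Cref{def:good-tree}.
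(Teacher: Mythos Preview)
Your proof is correct and follows essentially the same approach as the paper: both exploit that every edge of $H\setminus F$ carries $\widetilde{x}$-capacity exactly $1/(4\ell\beta)$, and that distinct shattered components contribute distinct such edges toward $y(\map^{-1}(F))$, so the goodness bound $y(\map^{-1}(F))\le 1/2$ caps their number at $2\ell\beta$. The paper phrases this as a flow argument (each shattered component has an $(s,t)$ pair with $(s,t)$-flow $\ge 1/(4\ell\beta)$ in $\+T$ that is destroyed by removing $\map^{-1}(F)$, so the ``total loss'' is $\ge\nu/(4\ell\beta)$), whereas you make the charging explicit by picking a witness tree edge $f_Q\in\map^{-1}(F)\cap E(\+T_Q)$ and a witness graph edge $e_Q\in\delta_G(Y_{f_Q})$ internal to $Q$; your version is arguably more careful about why the per-component contributions are additive (distinct $e_Q$'s when sharing the same $f^*$), a point the paper's flow phrasing leaves implicit.
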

\begin{proof}
  Let $Q\in \=Q^F$ be any shattered component.
  By definition, there exist two vertices $s,t\in Q$ that are disconnected in $\+T\setminus \map^{-1}(F)$, while being connected in $H\setminus F$.
  Since each edge in $H\setminus F$ has capacity $1/(4\ell\beta)$, the maximum flow from $s_i$ to $t_i$ in $H\setminus F$ has value at least $1/(4\ell\beta)$ and so do these two vertices in the tree $T$.
  However, $s$ and $t$ are disconnected in $\+T\setminus \map^{-1}(F)$, meaning that there cannot be any flow between them in $\+T\setminus \map^{-1}(F)$.
  Applying this fact to every shattered component, we conclude that the total loss of flow value is at least $\nu/(4\ell\beta)$ after removing edges $\map^{-1}(F)$ from $\+T$, where $\nu$ is the number of shattered components.
  Therefore
  \begin{align*}
    \nu/(4\ell\beta)
    \le y(\map^{-1}(F))
    \leq 1/2.
  \end{align*}
  Hence we conclude that $\nu \le 2\ell\beta$.
\end{proof}

\begin{lemma}[Upper Bound on the Number of Tree-Demand-Pairs]
\label{lem:num-demand}
For each tree $\+T$, the number of tree-demand-pairs is $|Z_{\+T}|\le q\cdot n^{2\ell}\cdot 2^{2\ell\beta}$.
\end{lemma}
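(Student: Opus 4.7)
The plan is a direct counting argument enumerating the three sources of variability in the construction of $Z_{\+T}$: the edge-set $F$, the partition $(A',B')$ of shattered components, and the index $i\in[q]$ of the original demand-pair.

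First, I would bound the number of choices of $F$. Since $F\subseteq E(H)$ has exactly $\ell$ edges and $|E(H)|\le\binom{n}{2}\le n^2$, the number of such edge-sets is at most $\binom{n^2}{\ell}\le n^{2\ell}$.

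Next, fixing such an $F$, I only need to consider the case where $\+T$ is good for $F$, since $Z_{\+T}$ is defined as the union of $Z_F$ over only those $F$ that are good on $\+T$. For any good $F$, \Cref{lem:num-shattered} guarantees that the number of shattered components in $\=Q^F$ is at most $2\ell\beta$. The set $\=U^F_{\+T}$ consists of ordered partitions $(A',B')$ (with $A'=\emptyset$ or $B'=\emptyset$ allowed) of this collection of shattered components, so $|\=U^F_{\+T}|\le 2^{2\ell\beta}$, obtained by assigning each shattered component to either $A'$ or $B'$.

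Finally, for each such pair $(F,(A',B'))$, the construction ranges over at most $q$ choices of $i\in[q]$ (those with $Q_{S_i}\cap Q_{T_i}=\emptyset$). Multiplying the three bounds yields
\[
|Z_{\+T}|\;\le\; n^{2\ell}\cdot 2^{2\ell\beta}\cdot q,
\]
as claimed. No single step is really the main obstacle here: the combinatorial heavy lifting has already been done in \Cref{lem:num-shattered}, which controls the number of shattered components in a good tree by exploiting that the capacity $y(\map^{-1}(F))\le 1/2$ for a good tree together with the fact that each shattered component contributes at least $1/(4\ell\beta)$ to the lost flow. The remaining task is the straightforward product of three enumeration bounds.
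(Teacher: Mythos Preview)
Your proposal is correct and follows essentially the same counting argument as the paper: bound the number of edge-sets $F$ by $\binom{|E(H)|}{\ell}\le n^{2\ell}$, invoke \Cref{lem:num-shattered} to bound the number of partitions of shattered components by $2^{2\ell\beta}$ for each good $F$, and multiply by the $q$ choices of $i$. The paper's proof is just a one-line version of exactly this product bound.
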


\begin{proof}
By \Cref{lem:num-shattered} we have
\[
  \abs{Z_{\+T}} \le \sum_{F\subseteq E(H)\cmid \abs{F}=\ell, \text{$F$ is good on $\+T$}} \abs{Z_F}
  \le \binom{\abs{E(H)}}{\ell} \cdot (2^{2\ell\beta}\cdot q)
  \le q\cdot n^{2\ell}\cdot 2^{2\ell\beta}.
  \qedhere
\]
\end{proof}





\section{Rounding Flows on the Tree}
\label{sec:roundtree}
For completeness, we illustrate how we apply the CGL's rounding in this section. The algorithm is shown in \Cref{alg:poly}.

\begin{algorithm}
\caption{TreeRounding}
\label{alg:poly}
\begin{algorithmic}[1]
        \For{$q=0,...,2\log(\frac{2n^2}{f})$ and $v\in V(\+T)$}
            \State With probability $\min\{1, \frac{8}{f}\cdot 2^{-q}\}$, $q$-mark $v$.
            \If{$v$ is $q$-marked}
                \State $y^{v,q}\gets 2^{q+1}y$ and $E^{v,q}\gets \emptyset$.
                \For{$\tau'$ interations}
                    \State $E^{v,q}\gets E^{v,q}\cup \text{RoundGKR}({\+T}_v,v,y^{v,q})$.
                \EndFor
                \State $H\gets H\cup\map(E^{v,q})$.
            \EndIf
        \EndFor
    \State return $H$.
\end{algorithmic}
\end{algorithm}

We prove the following \Cref{lem:cgl} by combining \Cref{lem:cgl-fea} and \Cref{lem:cgl-opt}.
\treerounding*

First, we introduce Garg-Konjevod-Ravi (GKR) Rounding which is repeatedly used in the TreeRounding algorithm. Denote the rounding algorithm for GST used in \cite{GargKR00} by RoundGKR. Let $\+T$ be the tree rooted at $r$, and $y$ be a fractional solution to the standard cut-based LP for GST. RoundGKR gives a way to connect every group $\+G_i$ to root $r$ with high probability. To be more precise, we state the following result implicitly shown by the authors.

\begin{lemma}
    \label{lem:gkr}
    Suppose that, for some $E'\subseteq E(\+T)$, capacity $y$ support an unit flow from $\+G_i\subseteq V(\+T)$ to $r$ in $E'$. RoundGKR$(\+T,r,y)$ where $\+T$ is a tree using fractional solution $y$ and $r$ is the root of $\+T$, gives a path $P\subseteq E'$ connects $\+G_i$ to $r$ with probability at least $\Omega(1/\log n)$.
\end{lemma}

In the TreeRounding algorithm, we have a tree embedding $\+T$ initially. For one iteration $q$ and node $v$, with some probability w.r.t $q$, it applies the subroutine RoundGKR for $\tau'$ iterations. In the end, the union of the corresponding edges in the original graph of the edges selected by the RoundGKR gives a solution to connect $A$ and $B$.

Now we are ready to prove \Cref{lem:cgl}. First, we show the probability that the algorithm connects each demand-pair $A$ and $B$.

\begin{lemma}
    \label{lem:cgl-fea}
    \Cref{alg:poly} connects $A$ to $B$ with constant probability $\phi$.
\end{lemma}

\begin{proof}
    We know that there is an $f$ unit of flow from $A$ to $B$ in $E'$. Then, we decompose this flow into a family of flow paths. By discarding all the flow paths whose flow is less than $\frac{1}{2n^2} \cdot f$, we obtain a new family of flow path $P$. Denote the flow of $P_j$ by $p_j$. Let $\mu_v$ be the total amount of flow turning at $v$. Then we have
    $$
        \sum_{v\in V(\+T)}\mu_v = \sum_{j=1}^{|P|}p_j\geq f - \frac{1}{2n^2} \cdot f \cdot |E(T)| \geq \frac{f}{2}.
    $$
    For each node $v \in V(\+T)$, let $q_v\in\set{0,...,2\log(\frac{2n^2}{f})}$ be an integer such that $\mu_v \in (2^{-q_v-1},2^{-q_v}]$.

    Denote the event that some node $v\in V(\+T)$ is $q_v$-marked by $\+I$. We claim that $\Prob[\widebar{\+I}|\+G]\leq e^{-3/4}$. Let $X_v$ be 1 if node $v\in V(\+T)$ is $q_v$-marked and 0 otherwise. We have
$$
    \Prob\sqtp{\widebar{\+I}|\+G}=\Prob\sqtp{\sum_{v\in V(T)}X_v=0}.
$$
Since $\set{X_v}$ are independent,
$$
    \Exp\sqtp{\sum_{v\in V(\+T)}X_v}=\sum_{v\in V(\+T)}\frac{8}{f}\cdot2^{-q_v}\geq \frac{8}{f}\sum_{v\in V(\+T)}p_v\geq 4.
$$
By Chernoff's bound,
$$
    \Prob\sqtp{\sum_{v\in V(\+T)}X_v\leq 1}\leq e^{-\frac{1}{3}(\frac{3}{4})^2 4}=e^{-3/4}.
$$

Conditioned on $\+I$, there is some node $v\in V(\+T)$ that is $q_v$-marked. Consider the event $\+C$ that $A$ and $B$ are connected by the union of the solution computed by RoundGKR in $\tau'$ iterations on node $v$ for $q=q_v$. Observe that $2^{q_v+1}\phi_v\geq 1$. Therefore, by \Cref{lem:gkr}, RoundGKR selects a correct path with probability at least $1-(1-\Omega(\frac{1}{\log n}))^{\tau'}$. Thus, $\Prob[\widebar{\+C}|\+G,\+I]\leq \eps, $ where $\tau'=O(\log n)$. Altogether, the probability that the Tree Rounding connects $A$ to $B$ is at least
\begin{align*}
    1-\Prob[\widebar{\+G}]-\Prob[\widebar{\+I}|\+G]-\Prob[\+G]&\cdot\Prob[\+I|\+G]\cdot\Prob[\widebar{\+C}|\+G,\+I] 
     \geq 1-1/2-e^{-3/4}-\eps=:\phi>0.\qedhere
\end{align*}

\end{proof}

Next we analyze the cost incurred by the algorithm.

\begin{lemma}
\label{lem:cgl-opt}
$\Exp[c(H)]=O(\frac{1}{f} \cdot \beta \cdot \log^3 n) \sum_{e\in E}c_e x_e$.
\end{lemma}

\begin{proof}
First, an edge $e'\in E(\+T)$ is selected iff $e'$ is in $E^{v,q}$ for some $q$ and some $q$-marked $v\in V(\+T)$. For each iteration of RoundGKR, we select $e'$ with probability $y^{v,q}=2^{q+1} y$. To proceed these iterations, the vertex $v$ need to be $q$-marked. Thus, the corresponding probability is at most $\frac{8}{f}\cdot 2^{-q}$. Putting everything together, edge $e$ is selected with probability at most
$$
\height(\+T)\cdot\sum_{q}\frac{8}{f}\cdot2^{-q}\cdot2^{q+1}y(e')\cdot\tau'=O\tp{\frac{\log^3 n}{f}}y(e').
$$
Then, an edge $e\in E$ is selected iff any edge in $\+M^{-1}(e)$ is selected by RoundGKR. Therefore, the expectation of an edge $e$ is selected is at most
$$
    O\tp{\frac{\log^3 n}{f}}\cdot\Exp\sqtp{\sum_{e'\in\map^{-1}(e)}y(e')}\leq O\tp{\frac{1}{f} \cdot \beta \cdot \log^3 n}\cdot x_e.
$$

Taking the summation of all edges, the expected cost of $c(H)$ is at most
\[
    O\tp{\frac{1}{f}\cdot \beta \cdot \log^3 n}\sum_{e\in E}c(e)x_e. \qedhere
\]
\end{proof}

\section{Conclusion and Open Problems}
\label{sec:conclusion}

In this paper, we have presented an approximation algorithm for Group EC-SNDP whose approximation ratio is $O(k^3\,\polylog(k,n,q))$. It is quite interesting that this factor resemblances the approximation ratio of $O(k^3\log n)$ for VC-SNDP \cite{ChuzhoyK12}.
%
The factor $k$ appears quite naturally in network design problems and might be the approximability threshold.
However, the best known negative result still has a lower bound of $k^{1/5-\epsilon}$, for $\epsilon > 0$, assuming $\NP\neq\ZPP$, \cite{ChalermsookDELV18} (combined with \cite{Lae2014} and the improvement in \cite{Man2019}). 
While it is rather unnatural that a survivable network design problem would have a hardness factor beyond $k$, the recent result of Liao, Chen, Laekhanukit and Zhang \cite{LiaoCLZ2022} shows that this is the case for the sister problem of Group EC-SNDP, namely the {\em $k$-connected directed Steiner tree} problem ($k$-DST) (a.k.a, {\em directed single-source $k$-connectivity}).
They showed that the approximation lower bound of $k$-DST is, indeed, at least $\Omega(2^{k/2}/k)$, which is far beyond polynomial. 
Any results in both directions would be surprisingly interesting, either the existence of $k^{1-\epsilon}\polylog(n)$-approximation algorithms or a hardness threshold of $O(k^{1+\epsilon})$, for some $\epsilon>0$. 

We are aware of the capacity-based probabilistic tree embedding for {\em $\alpha$-balanced} graphs, which appears in the work of Ene, Miller, Pachocki and Sidford \cite{EneMPS16}. In our opinion, there is a high chance that the probabilistic capacity mapping would work for us, thus generalizing our framework to directed graphs, or more specifically, to the case of $k$-DST on $\alpha$-balanced graphs. Unfortunately, the construction in \cite{EneMPS16} is tailored for minimizing congestion for single-source oblivious routing, and it is not clear whether the bound holds for other set-pairs in the graph.
Our algorithm, on the other hand, requires the congestion guarantee to hold for pairwise subsets, in which one contains the source (i.e., root). 
Thus, one direction to push forward on studying survivable network design is in developing a capacity-based probabilistic tree-embedding that is able to deal with the more general settings of oblivious routing on $\alpha$-balanced graphs.
\paragraph*{Acknowledgement.}

This work is supported by Science and Technology Innovation 2030 –“New Generation of Artificial Intelligence” Major Project No.(2018AAA0100903), NSFC grant 61932002, Program for Innovative Research Team of Shanghai University of Finance and Economics (IRTSHUFE) and the Fundamental Research Funds for the Central Universities.

Qingyun Chen is supported in part by NSF grants CCF-2121745 and CCF-1844939. Bundit Laekhanukit is partially supported by the 1000 Talents Plan award by the Chinese government.

\bibliography{ref}
\bibliographystyle{alpha}
\end{document}